\definecolor{quantumviolet}{HTML}{53257F}
\newtheorem{proposition}{Proposition}
\newtheorem{optproblem}{Optimization Problem}
\theoremstyle{definition}
\newtheorem*{definition}{Definition}
\crefname{section}{Sec.}{Secs.}
\crefname{appendix}{App.}{Apps.}
\crefname{equation}{Eq.}{Eqs.}
\crefname{figure}{Fig.}{Figs.}
\crefname{proposition}{Prop.}{Props.}
\crefname{obs}{Obs.}{Obs.}
\crefname{optproblem}{Prob.}{Probs.}
\Crefname{section}{Section}{Sections}
\Crefname{appendix}{Appendix}{Appendices}
\Crefname{equation}{Equation}{Equations}
\Crefname{figure}{Figure}{Figures}
\Crefname{proposition}{Proposition}{Propositions}
\Crefname{obs}{Observation}{Observations}
\Crefname{optproblem}{Problem}{Problems}
\newcommand{\one}{\mathbbm{1}} % identity matrix
\newcommand{\id}{\text{id}} % identity operator
\newcommand{\Complexes}{\mathbbm{C}} % complexes
\newcommand{\Naturals}{\mathbbm{N}} % naturals
\newcommand{\cA}{\mathcal{A}}
\newcommand{\cB}{\mathcal{B}}
\newcommand{\cC}{\mathcal{C}}
\newcommand{\cE}{\mathcal{E}}
\newcommand{\cH}{\mathcal{H}}
\newcommand{\cI}{\mathcal{I}}
\newcommand{\cK}{\mathcal{K}}
\newcommand{\cL}{\mathcal{L}}
\newcommand{\cM}{\mathcal{M}}
\newcommand{\cO}{\mathcal{O}}
\newcommand{\cS}{\mathcal{S}}
\newcommand{\cT}{\mathcal{T}}
\newcommand{\cU}{\mathcal{U}}
\newcommand{\fS}{\mathfrak{S}}
\newcommand{\fT}{\mathfrak{T}}
\newcommand{\cHS}{\mathcal{H}_\text{S}}
\newcommand{\cHE}{\mathcal{H}_\text{E}}
\newcommand{\cHES}{\mathcal{H}_\text{ES}}
\newcommand{\dS}{d_\text{S}}
\newcommand{\dE}{d_\text{E}}
\newcommand{\dES}{d_\text{ES}}
\newcommand{\rhoEz}{\rho^{0}_\text{E}}
\newcommand{\rhoSz}{\rho^{0}_\text{S}}
\newcommand{\rhoE}{\rho_\text{E}}
\newcommand{\rhoS}{\rho_\text{S}}
\newcommand{\rhoES}{\rho_\text{ES}}
\newcommand{\T}{\text{T}}
\newcommand{\AIO}[2]{\text{A}_{#1}^{#2}}
\newcommand{\va}{\vec{a}}
\newcommand{\vt}{\vec{t}}
\newcommand{\vs}{\vec{s}}
\newcommand{\vu}{\vec{u}}
\renewcommand{\vr}{\vec{r}}
\newcommand{\SA}[1]{\text{A}_{#1}}
\newcommand{\SI}{\text{I}}
\newcommand{\SO}{\text{O}}
\newcommand{\SEP}{\text{SEP}}
\newcommand{\PpL}{\text{P}^{+}_{L}}
\newcommand{\PpN}{\text{P}^{+}_{N}}
\newcommand{\PhiL}{\Phi_{\AIO{1}{L}}}
\newcommand{\PhiN}{\Phi_{\AIO{1}{N}}}
\newcommand{\PhitL}{\widetilde{\Phi}_{\AIO{1}{L}}}
\newcommand{\Symn}[1]{\operatorname{Sym}_{n}(#1)}
\newcommand{\SymSep}{\operatorname{SymSEP}}
\newcommand{\Ta}{{\T_{\alpha}}}
\newenvironment{smallpmatrix}{\bigl(\begin{smallmatrix}}{\end{smallmatrix}\bigr)}
\let\union\cup
\let\intersection\cap
\newcommand{\binomial}[2]{\begin{smallpmatrix}#1 \\ #2\end{smallpmatrix}}
\newcommand{\multiset}[2]{\left( \! \left( \! \begin{smallmatrix}#1 \\ #2\end{smallmatrix} \! \right) \! \right)}
\newcommand{\Tr}[1]{\operatorname{tr}\left[ #1 \right]}
\newcommand{\TrP}[2]{\operatorname{tr}_{#1}\hspace{-0.25em}\left[ #2 \right]}
\newcommand{\floor}[1]{\left \lfloor #1 \right \rfloor}
\newcommand{\ket}[1]{\left \vert #1 \right \rangle}
\newcommand{\bra}[1]{\left \langle #1 \right \vert}
\newcommand{\braket}[2]{\left \langle #1 \middle \vert #2 \right \rangle}
\newcommand{\ketbra}[2]{\left \vert #1 \right \rangle \hspace{-0.4em} \left \langle #2 \right \vert}
\newcommand{\purestate}[1]{\left \vert #1 \right \rangle \hspace{-0.4em} \left \langle #1 \right \vert}
\renewcommand{\vec}[1]{\boldsymbol{#1}}
\newcommand{\rank}{\operatorname{rank}}
\newcommand{\range}[1]{\operatorname{ran}(#1)}
\newcommand{\kernel}[1]{\operatorname{Ker}(#1)}
\newcommand{\conv}{\operatorname{conv}}
\newcommand{\set}[1]{\left \{ \; #1 \; \right \}}
\newcommand{\cond}{\;\; \middle \vert \;\;}
\newcommand{\Choi}[1]{\mathcal{C}( #1 )}
\newcommand{\DC}{\operatorname{DC}}
\newcommand{\optimization}[5]{
	\bigskip
	\begin{optproblem}\label{#1} #2
		\begin{equation}
			\begin{split}
				\textbf{Given:}\quad & #3 \\
				\textbf{Find:}\quad & #4 \\
				\textbf{Subject to:}\quad & #5
			\end{split}
		\end{equation}
	\end{optproblem}
	\medskip
}
\begin{document}
	
\title{Witnessing environment dimension through temporal correlations}

\author{Lucas B. Vieira}%\orcidlink{0000-0002-6530-8271}
\email{lucas.vieira@oeaw.ac.at}
\affiliation{Institute for Quantum Optics and Quantum Information (IQOQI), Austrian Academy of Sciences,\\ Boltzmanngasse 3, 1090 Vienna, Austria}
\affiliation{Faculty of Physics, University of Vienna, Boltzmanngasse 5, 1090 Vienna, Austria}

\author{Simon Milz}%\orcidlink{0000-0002-6987-5513}
\email{milzs@tcd.ie}
\affiliation{School of Physics, Trinity College Dublin, Dublin 2, Ireland}
\affiliation{Faculty of Physics, University of Vienna, Boltzmanngasse 5, 1090 Vienna, Austria}
\affiliation{Institute for Quantum Optics and Quantum Information (IQOQI), Austrian Academy of Sciences,\\ Boltzmanngasse 3, 1090 Vienna, Austria}

\author{Giuseppe Vitagliano}%\orcidlink{0000-0002-5563-3222}
\email{giuseppe.vitagliano@tuwien.ac.at}
\affiliation{Vienna Center for Quantum Science and Technology, Atominstitut, TU Wien,  1020 Vienna, Austria}

\author{Costantino Budroni}%\orcidlink{0000-0002-6562-7862}
\email{costantino.budroni@unipi.it}
\affiliation{Department of Physics ``E. Fermi'' University of Pisa, Largo B. Pontecorvo 3, 56127 Pisa, Italy}
\affiliation{Faculty of Physics, University of Vienna, Boltzmanngasse 5, 1090 Vienna, Austria}
\affiliation{Institute for Quantum Optics and Quantum Information (IQOQI), Austrian Academy of Sciences,\\ Boltzmanngasse 3, 1090 Vienna, Austria}

\begin{abstract}
	We introduce a framework to compute upper bounds for temporal correlations achievable in open quantum system dynamics, obtained by repeated measurements on the system. As these correlations arise by virtue of the environment acting as a memory resource, such bounds are witnesses for the minimal dimension of an effective environment compatible with the observed statistics. These witnesses are derived from a hierarchy of semidefinite programs with guaranteed asymptotic convergence. We compute non-trivial bounds for various sequences involving a qubit system and a qubit environment, and compare the results to the best known quantum strategies producing the same outcome sequences. Our results provide a numerically tractable method to determine bounds on multi-time probability distributions in open quantum system dynamics and allow for the witnessing of effective environment dimensions through probing of the system alone.
\end{abstract}

\maketitle

%%%%%%%%%%%%%%%%%%%%%%%%%%%%%%%%%%%%%%%%%%%%%%%%%%%%%%%%%%%%%%%%%%%%%%%%%%%%%%%%%%%%%%%%%%%%%%%%
\section{Introduction}

Physical systems are never truly isolated, and inevitably interact with their surrounding environment~\cite{BreuerPetruccionebook,RivasHuelgabook}. As a consequence, information within the system leaks away into its surroundings, leading to entanglement between the system and the environment. In many instances, this leaked information may be partially recovered at a later time, leading to non-Markovian dynamics, i.e., non-negligible memory effects and complex correlations in time~\cite{RivasRev2014,BreuerRev16,Milz_2021}.

Like their spatial counterpart, temporal correlations in quantum mechanics fundamentally differ from those that can be observed in the classical case. Such differences between classical and quantum temporal correlations have been noted since the works of Leggett and Garg~\cite{leggett_quantum_1985, leggett_realism_2008, emary_leggettgarg_2013, vitagliano2023}. In the presence of memory, a clear distinction between an underlying process -- carrying temporal correlations -- and the measurement process -- probing said correlations -- can be obtained by employing correlation kernels~\cite{Lindblad1979, accardi_quantum_1982} or higher order quantum maps~\cite{PollockPRA2018} for their description.

Broadly speaking, the dimension of a physical system, i.e., the number of perfectly distinguishable states, imposes fundamental constraints over the temporal correlations it is able to produce~\cite{budroni2019memory}. In this sense, the physical dimension acts as a memory resource, restricting the amount of information stored about the past that is capable of affecting the future. This memory constraint leads to different behaviors (i.e., different achievable temporal correlations) in classical, quantum, or more general physical theories~\cite{fritzNJP2010,hoffmann2018,budroni2019memory,spee2020simulating,vieira2022temporal,mao2022strucdimbound}. In particular quantum memories are known to allow for a larger set of correlations than classical ones of the same size~\cite{budroni2019memory,budroni2021ticking,vieira2022temporal}. However, this advantage only holds for restricted memory size, i.e., if the memory size is \emph{unrestricted}, all correlations compatible with a time-ordered causal structure may be achieved with either classical or quantum memories~\cite{fritzNJP2010,hoffmann2018}. Therefore, understanding the nature of dimensional constraints on observable correlations sheds light on fundamental differences between our descriptions of physical systems, as well as their connection with causality.

These limitations imposed by the dimensionality of physical systems have been exploited for the construction of ``dimension witnesses'', inequalities which, when violated, certify the minimum dimension of the system compatible with made observations~\cite{GallegoPRL2010, BrunnerPRL2013, GuehnePRA2014, BudroniPRL2014, SchildPRA2015, spee2020simulating, Sohbi2021}. In a similar spirit, our approach allows for the computation of upper bounds on the temporal correlations achievable in an open system dynamics with an environment of bounded dimension, so that any violations of these bounds certify the minimum dimension of the effective environment. Employing techniques from entanglement detection~\cite{dps2004completefamily,navascues2009}, these bounds are obtained by exploiting the inherent symmetries of the problem. In particular, we relax the \emph{a priori} non-linear problem of computing maximum joint probabilities in sequential measurements to a numerically tractable hierarchy of semidefinite programs (SDPs) \cite{CVXBook}. Both the computational accessibility of the final formulation of the problem, as well as the non-triviality of the resulting bounds, are then demonstrated for paradigmatic examples, showing that, indeed, joint probability distributions obtained from probing an open system alone provide viable means to deduce dimensional properties of the a priori experimentally inaccessible environment it is coupled to.

The paper is organized as follows. In \cref{sec:preliminary}, we introduce the temporal sequence protocol, the notion of temporal correlations, and their mathematical description. \cref{sec:opensys} discusses the application of temporal correlations to the task of characterizing open systems and their environment. \cref{sec:SDPmain} describes the semidefinite program we constructed to bound temporal correlations, with \cref{sec:numerics} presenting the numerical results we obtained in their optimization. In \cref{sec:implementation} we comment on the various challenges involved in realizing these numerical optimizations, and how we have overcome them. In \cref{sec:discussion} we comment on several additional technical details involving our work, with \cref{sec:conclusions} covering our conclusions and future outlook.

\section{The measurement protocol}\label{sec:preliminary}

\begin{figure}[t]\centering
	\includegraphics[width=1.0\linewidth]{"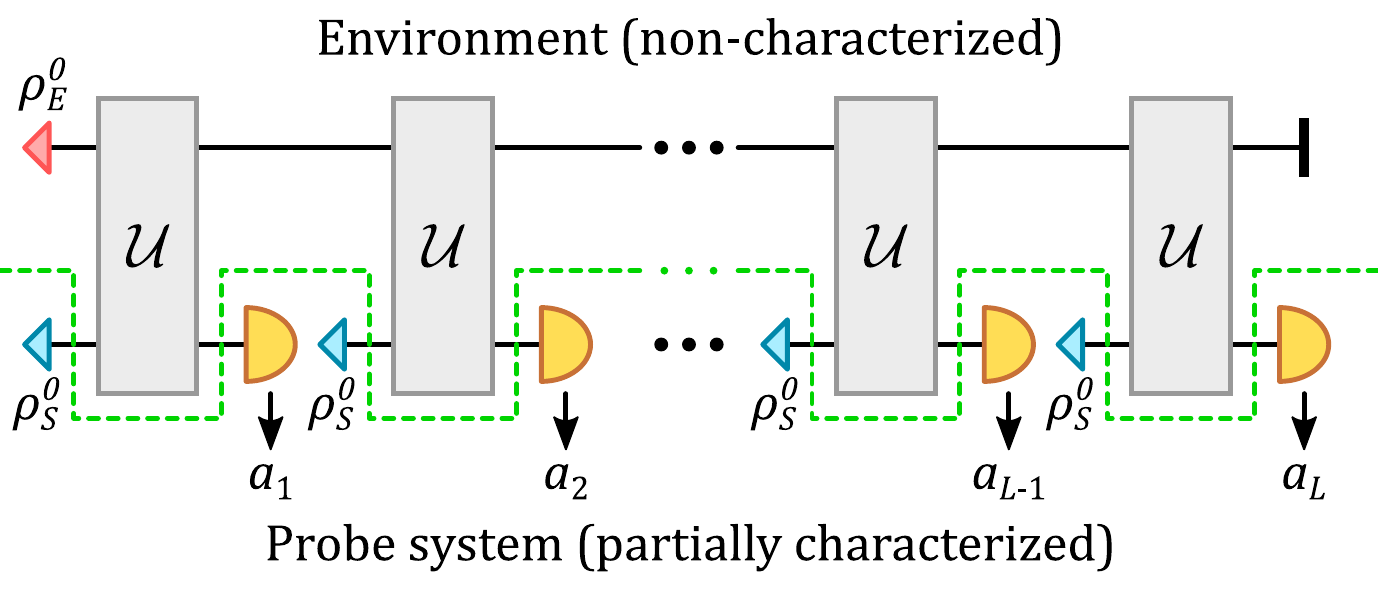"}
	\caption{Diagram of the sequential measurement protocol, involving a partially characterized probe system and a non-characterized environment, here separated by the dashed line. The sequence of measurement outcomes $\va = a_1 \dots a_L$ is obtained by repeated preparations of a probe state $\rhoSz$, fixed and the same at every time step, which is left to interact with the environment, then followed by measurements (semicircles). The environment acts as a \emph{memory} resource, capable of establishing long-term correlations between measurements.}
	\label{fig:protocol}
\end{figure}

We assume throughout that an experimenter is able to prepare the system in a known state and perform measurements in a certain basis. Its dynamics, on the other hand, are not under experimental control, and are governed by its inevitable interaction with the environment. Typically, this environment is a much larger system, generally inaccessible, and featuring complex dynamics. Together, system and environment undergo closed, i.e., unitary evolution. Their interaction leads to an imprinting of information on the probe system, which can be used to learn something about the environment by means of the probe alone. In fact, this is a common way of making indirect measurement on (typically large many-body) systems that are not fully controlled~\cite{braginsky,BuschBook}. For example, a small probe can be used for estimating an unknown parameter of a larger (many-body) environment~\cite{DegenReinhardCappellaroRev17}, in particular temperature~\cite{AlipourMehboudiRezakhaniPRL14,sabin2014impurities,CorreaMehboudiAdessoSanpera15,Mehboudi_2019}. 

We now introduce the details of our scenario and the notation used. By $\cHS$ and $\cHE$ we denote the finite-dimensional Hilbert spaces of system and environment, with $\dS = \dim \cHS$ and $\dE = \dim \cHE$, and their joint space as $\cHES = \cHE \otimes \cHS$, with $\dES = \dE \dS$. 
The experiment involves $L$ identical measurements on the probe system, at discrete time steps, each outcome $a_\ell \in \cA$ collected into a sequence $\va = a_1 \dots a_L$. For simplicity, we consider $\cA = \{0, 1\}$ in the following; the generalization to arbitrary outcomes is straightforward. Measurements on the system are described by a Positive Operator-Valued Measure (POVM) $(E_S^a)_a$, i.e., $E_S^a\geq 0$ and  $\sum_{a\in\cA} E_S^a = \one_S$.

The system is initially in the state $\rhoSz$, interacts with the environment via the global unitary $\cU(\cdot) = U \cdot U^\dagger$, then it is measured and reprepared in the state $\rhoSz$. This procedure is repeated a total of $L$ times; see \cref{fig:protocol}. The unitary operation is assumed to be the same at each iteration (corresponding to, for example the situation of a time-independent system-environment Hamiltonian and temporally equidistant measurements); see \cref{sec:repunitaries} for more details.

In the following, we consider a measure-and-prepare operation of the form
\begin{equation}
	\begin{split}
	\cM_{a}(\rho_{ES}) = \TrP{S}{\rho_{ES} \cdot (\one_E \otimes E^a_S)} \otimes \rhoSz,
	\end{split}
\end{equation}
with the generalization to arbitrary operations being straightforward. The probability of a sequence of outcomes can then be written as
\begin{align}\label{eq:probbasic}
	\begin{split}
	p(\va|\dE) &= p(a_1, \dots, a_L|\dE) \\
	&= \Tr{\cM_{a_L} \!\! \circ \cU \! \circ \dots \circ \cM_{a_1} \!\! \circ \cU(\rhoEz \! \otimes \! \rhoSz)},
	\end{split}
\end{align}
where $p(\va|\dE)$ denotes that it is obtained with an environment of dimension $\dE$. Our task is to establish upper bounds on such probabilities given a finite amount/dimension of ``memory'' $\dE$. The assumption of identical measurements is essential for capturing this notion of memory, as arbitrary time-dependent operations would impose no non-trivial restriction on the probabilities.

For simplicity, we consider the maximization of the probability of a given sequence $\va$, namely, to find $\omega^{\va}_{\dE}$ such that $p(\va|\dE) \le \omega^{\va}_{\dE}$, for all possible correlations generated by an environment of dimension $\dE$. The generalization to arbitrary linear functions of the distributions $(p(\va|\dE))_{\va}$ is straightforward.

\begin{figure}[t]\centering
	\includegraphics[width=1.0\linewidth]{"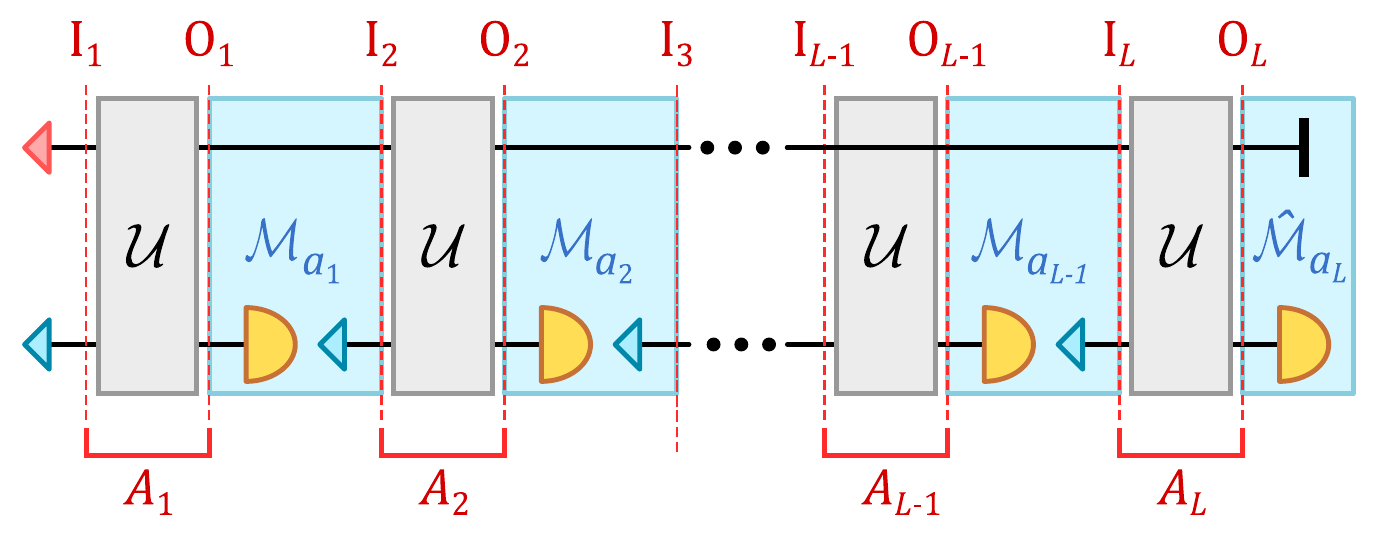"}
	\caption{The protocol with each time step having its own set of input and output Hilbert spaces, denoted by the vertical dotted lines. Note that the input and output spaces of the $\cU$ and $\cM_{a}$ are interleaved: $\cU$ has inputs $\SI_\ell$ and outputs $\SO_\ell$, but for $\cM_{a_\ell}$, inputs are $\SO_\ell$ and outputs $\SI_{\ell+1}$.}
	\label{fig:experimentunraveled}
\end{figure}

We formulate the problem via the Choi-Jamiołkowski (CJ) isomorphism~\cite{jamiolkowski1972,choi1975}, which involves taking multiple copies of the original Hilbert space $\cHES$ to describe its time evolution; see \cref{fig:experimentunraveled}. The $\ell$-th unitary acts as $\cU: \SI_\ell \mapsto \SO_\ell$, while the $\ell$-th measurement, placed between unitaries, acts instead as $\cM_a: \SO_{\ell} \mapsto \SI_{\ell+1}$. All spaces are isomorphic, i.e., $\SO \cong \SI \cong \cHES$. For convenience, the $\ell$-th unitary's input and output spaces are jointly referred to as $\SA{\ell} = \SI_\ell \otimes \SO_\ell$. It is important to emphasize that the local input and output spaces of the $\cU$ and $\cM_{a}$ are interleaved, in the sense that the outputs of one are the inputs of the other (\cref{fig:experimentunraveled}). For clarity, we shall use the cursive $\cI$ and $\cO$ to generically refer to each map's local input and output spaces: for unitaries, $\cI = \SI_\ell$ and $\cO = \SO_\ell$, and for measurements, $\cI = \SO_\ell$ and $\cO = \SI_{\ell+1}$.

Given a linear map $\Lambda: \cI \mapsto \cO$ between input and output Hilbert spaces $\cI$ and $\cO$, with $d_\cI = \dim \cI$ and $d_\cO = \dim \cO$, its Choi-Jamiołkowski representation~\cite{jamiolkowski1972,choi1975} is given by the matrix
\begin{equation}\label{eq:cjiso}
	\Choi{\Lambda} := \sum_{i,j=1}^{d_\cI} \ketbra{i}{j} \otimes \Lambda(\ketbra{i}{j}) = \id_\cI \otimes \Lambda(\purestate{\Omega}),
\end{equation}
where $\ket{\Omega} = \sum_{i=0}^{d_\cI-1} \ket{i i}$ is a non-normalized maximally entangled state. The matrix $\Choi{\Lambda}$ is typically referred to as a Choi matrix for the map $\Lambda$ and, if normalized, also as its Choi state. Note that $\Choi{\Lambda}$ is of size ${d_\cI d_\cO \times d_\cI d_\cO}$. Since in our case input and output spaces are isomorphic, we have $d_\cI = d_\cO = \dES$. We note that the decomposition of sequential maps in \cref{eq:probbasic} is equivalent to the standard formulation of temporal scenarios in the quantum comb~\cite{ChiribellaDArianoPerinottiPRA2009}, process matrix \cite{Oreshkov_2012,ChiribellaDArianoPerinottiValironPRA2013}, and process tensor \cite{PollockPRA2018} formalism.

%%%%%%%%%%%%%%%%%%%%%%%%%%%%%%%%%%%%%%%%%%%%%%%%%%%%%%%%%%%%%%%%%%%%%%%%%%%%%%%%%%%%%%%%%%%%%%%%
\section{Witnesses for open system dynamics}\label{sec:opensys}

Given a probe system $S$, the first question we may ask is whether or not it is open, i.e., if it is interacting with an environment at all, or equivalently, whether $\dE > 1$.
To answer this question, we compute $\omega^{\va}_{1} := \max_{\rhoSz, (E_S^a)_a} \; p(\va|\dE=1)$, giving the inequality
\begin{equation}\label{eq:bound_d1}
	p(\va|\dE=1) \leq \omega^{\va}_{1}.
\end{equation}
As an example, the maximum for the sequence $\va = 00101$ is $\omega^{\va}_{1} = (3/5)^3(2/5)^2 = 0.03456$, which we explain how to obtain shortly. If we perform an experiment and observe $p(\va = 00101) = 0.5$, then we must conclude the system is open. In other words, the inequality \cref{eq:bound_d1} acts as a witness for open systems.
However, if no violation is observed, the experiment is inconclusive.

To compute this maximum, we first note that, for our choice of $\cM_a$, since both $\rhoSz$ and $E_S^a$ are the same at each measurement, all outcomes are independent and identically distributed. Writing $q_a = \Tr{\rhoSz E^a_S}$ as the probability of outcome $a$, with $\sum_a q_a = 1$, and using $n_a$ as the number of occurrences of a symbol $a$ in $\va$, we have
\begin{equation}\label{eq:prob_d1}
	p(\va|\dE=1) = \prod_{\ell=1}^{L} q_{a_\ell} = \prod_{a \in \cA} q^{n_a}_a, \qquad \sum_{a \in \cA} n_a = L.
\end{equation}
The global maximum $\omega^{\va}_{1}$ can be found analytically with standard techniques for optimizing $\{q_a\}_a$ (see \cref{app:trivialbound}), and is given by
\begin{equation}\label{eq:trivialbound}
	p(\va|\dE=1) \leq \omega^{\va}_{1} = \prod_{a \in \cA} \left( \frac{n_a}{L} \right)^{n_a},
\end{equation}
with the convention $q^{n_a}_a=1$ for $q_a=n_a=0$. The maximum is independent of $\dS$ and is obtained with $E^a_S = \tfrac{n_a}{L} \one_S$ and any $\rhoSz$. 

Similar principles apply when developing witnesses for the case of $\dE > 1$, although the calculation of the maximum becomes nontrivial, as the probability no longer factorizes as in \cref{eq:prob_d1}. Concretely, we want to obtain the tightest bounds of the form
\begin{equation}\label{eq:oneseqbound}
	p(\va|\dE) \leq \omega^{\va}_{\dE},
\end{equation}
which hold for all possible realizations of the sequence $\va$ in an experiment. If a violation of \cref{eq:oneseqbound} is observed, we can certify that the dimension is greater than $\dE$.

In the simple case of $\dE=1$, we were able to optimize over $\rhoSz$ and $(E_S^a)_a$. In the general case, however, optimizing over all possible protocols (i.e. all possible $\rhoSz$, $(E_S^a)_a$ and $U$) is difficult. Nevertheless, some general considerations can be made about how the bounds depend on these parameters; see \cref{sec:detconds}.
In the following, we assume fixed and well characterized preparations and measurements, in agreement with our initial assumptions on the probe system. The optimization we need to perform is, thus,
\begin{equation}\label{eq:oneseqboundmax}
	\omega^{\va}_{\dE} := \max_{U} p(\va|\dE)
\end{equation}
where assumptions about the initial state of the environment can be removed by convexity and symmetry arguments; see \cref{sec:environment}.
In the following section, we explain how to compute the non-trivial maxima $\omega^{\va}_{\dE}$ via convex optimization methods.

%%%%%%%%%%%%%%%%%%%%%%%%%%%%%%%%%%%%%%%%%%%%%%%%%%%%%%%%%%%%%%%%%%%%%%%%%%%%%%%%%%%%%%%%%%%%%%%%
\section{A hierarchy of semidefinite programs bounding temporal correlations}\label{sec:SDPmain}

The maxima in \cref{eq:oneseqbound} require that the probe state $\rhoSz$ and measurements $(E_S^a)_a$ are characterized and fixed throughout the experiment, as discussed in the previous section. To obtain upper bounds for the maximum $\omega_{\dE}^{\va}$, we formulate the problem as a hierarchy of semidefinite programs (SDPs)~\cite{CVXBook}. This class of optimization problems are ubiquitous in quantum information theory, offering strict convergence of solutions for many classes of problems, as well as having efficient algorithms widely available for solving them~\cite{CVXBook,wolkowicz2012sdp}. For a recent compendium on SDPs in quantum information science, see \cite{skrzypczyk2023,tavakoli2023semidefinite}.
In the following, we explain the general steps taken in formulating our problem as an SDP, which can be solved numerically and, importantly, efficiently. A more detailed step-by-step formulation is included in \cref{app:theSDP}.

Under the CJ representation, the repeated applications of the unitary and measurements can each be written as single operators existing in a larger space, encompassing multiple time steps, with their input and output spaces interleaved. This is illustrated in \cref{fig:experimentunraveled}. Via \cref{eq:cjiso}, we define
\begin{equation}\label{eq:mainchoimatrices}
	\begin{split}
	C_U &:= \tfrac{1}{d_{ES}}\cC(\cU), \\
	M_a &:= \cC(\cM_a), \quad\text{and}\quad \hat{M}_a := \TrP{\cO}{M_a}.
	\end{split}
\end{equation}
With this, we can write the $L$ repeated applications of $\cU$ as $C_U^{\otimes{L}}$, and the measure-and-prepare operations as a single operator
\begin{equation}\label{eq:Xdef}
	X := (\dES)^L (\rhoE^0 \otimes \rhoSz) \otimes M_{a_1} \otimes M_{a_2} \otimes \cdots \otimes \hat{M}_{a_L},
\end{equation}
where $\hat{M}_a$ is simply the final measurement without a re-preparation, obtained by partially tracing $M_a$ over its output space. As the goal is to obtain a maximum, the initial environment state $\rhoEz$ can be chosen to be pure by convexity arguments, fixed to be $\ketbra{0}{0}_E$ by unitary invariance, and considered as part of the experimental setup $X$ without loss of generality; see \cref{sec:environment}. We can then express the probability of $\va$ as
\begin{align}
	p(\va|\dE) = \Tr{ X^\T C_U^{\otimes L} }.
\end{align}
Note that by \cref{eq:cjiso} and \cref{eq:mainchoimatrices} $C_U$ is positive, rank-1, and normalized, which makes $C_U^{\otimes{L}}$ a pure symmetric separable quantum state. In addition, each $C_U$ satisfies $\TrP{\cO}{C_U} = \one_{\cI} / \dES$, which arises from the trace-preserving property of each individual unitary.
With all of these observations, we reformulate \cref{eq:oneseqboundmax} as the equivalent optimization problem:
\begin{equation}\label{eq:firstSDPmain}
	\begin{split}
		\textbf{Given:}\quad & \displaystyle \dE,\; \dS,\; \rhoEz,\; \rhoSz,\; \{ M_a \}_a,\; \va, \\
		\text{with} \quad &X := (\dES)^L (\rhoEz \otimes \rhoSz) \otimes M_{a_1} \otimes \cdots \\
		&\cdots \otimes M_{a_{L-1}} \otimes \hat{M}_{a_L}, \\
		\textbf{Find:}\quad & \omega^{\va}_{\dE} := \max_{C_U} \Tr{X^\T C_U^{\otimes L}} \\
		\textbf{Subject to:}\quad & C_U \ge 0, \quad \rank C_U = 1,\\
		&\TrP{\cO}{C_U} = \one_{\cI} / \dES.
	\end{split}
\end{equation}
As $C_U$ enters in the objective function as a tensor power, and we further require it to be rank-1, the problem is both non-linear and non-convex with respect to $C_U$, and thus the above problem cannot be directly solved as an SDP. To reach an SDP \emph{relaxation} of \cref{eq:firstSDPmain}, we first transform it into a chain of equivalent problems.

The problem can be made convex by replacing $C_U^{\otimes L}$ with a separable state on the symmetric subspace
\begin{equation}\label{eq:CULmain}
	\PhitL := \sum_i p_i \purestate{\phi_i}^{\otimes L},
\end{equation}
with $p_i \ge 0$, $\sum_i p_i = 1$.
By convexity, the maximum will be achieved for a rank-1 $\PhitL$, therefore this relaxation leaves the optimal value unchanged. Thus, we replace $C_U^{\otimes L}$ in \cref{eq:firstSDPmain} with $\PhitL$ satisfying
\begin{align}\begin{split}\label{eq:PhitL}
	\PhitL \in \SEP_L, \qquad &\PpL \PhitL = \PhitL,
\end{split}\end{align}
and the partial trace constraint of \cref{eq:firstSDPmain}, where $\SEP_L$ is the set of fully separable $L$-partite states, and $\PpL$ is the projector onto the symmetric subspace of $L$ systems (see the definition in \cref{eq:Pplus} in \cref{app:impsym}). In fact, the constraints in \cref{eq:PhitL} are exact, as it can be shown that all separable states in the symmetric subspace are of the form in \cref{eq:CULmain}; see \cref{subsec::SymRank1} for a proof.

So far, we transformed the original problem into an equivalent one, but due to the condition $\PhitL \in \SEP_L$ it is not yet an SDP. This requirement can be obtained by relaxing the original problem via the quantum de Finetti theorem~\cite{Caves2002} (see also \cite{christandl2007one}), which tells us that
$\PhitL$ can be approximated as the reduced state of a larger symmetric (and potentially entangled) state $\PhiN$, such that $\PhitL \approx \operatorname{tr}_{\AIO{L+1}{N}}[ \PhiN ]$, where $\operatorname{tr}_{\AIO{L+1}{N}}$ is the trace over systems $L+1, \dots, N$.

While this relaxation \emph{a priori} only provides an upper bound for the original problem, it establishes a hierarchy of approximate solutions, which are known to converge exactly to the separable set as $N$ increases~\cite{dps2004completefamily}. The de Finetti theorem, and analogous results for permutationally invariant (or exchangeable), rather than symmetric, operators have found broad applications in quantum information theory, from entanglement detection~\cite{dps2002distinguishing,dps2004completefamily,navascues2009} to more general optimization problems, such as evaluating convex roofs of entanglement measures \cite{TothPRL2015}, constrained bilinear optimization~\cite{BertaMP2021}, dimension-bounded quantum games~\cite{Jee2021}, as well as rank-constrained optimization~\cite{yu2020quantuminspired}.
All the above mentioned results exploit the basic idea of using either the symmetric subspace or permutation invariance to relax nonlinear constraints. In particular, Ref.~\cite{BertaMP2021} analyzed the optimization over pairs of quantum channels, basing their construction on permutation invariant operators, and showing the partial trace constraint necessary to translate the usual procedure (e.g., in the entanglement detection approach) from states to channels. Moreover, \cite{yu2020quantuminspired} introduced the idea of a rank-constrained optimization based on the symmetric subspace, which is central to impose the rank constraint in \cref{eq:firstSDPmain}. Our construction is inspired by these works and could be derived starting from some of these results. However, it is more straightforward to provide a direct construction in terms of the de Finetti theorem; see \cref{app:SDPdeFinetti}.

Ultimately, we arrive at an SDP relaxation of the original problem:
\begin{equation}\label{eq:finalSDPmain}
	\begin{split}
		\textbf{Given:}\quad & \displaystyle \dE,\; \dS,\; \rhoEz,\; \rhoSz,\; \{ M_a \}_a,\; \va,\; N, \\
		\text{with} \quad &X := (\dES)^L (\rhoEz \otimes \rhoSz) \otimes M_{a_1} \otimes \cdots \\
		&\cdots \otimes M_{a_{L-1}} \otimes \hat{M}_{a_L}, \\
		\text{and} \quad &\hat{M}_{a_L} := \TrP{\cO}{M_{a_L}} \\
		\textbf{Find:}\quad & \tilde{\omega}^{\va,N}_{\dE} := \max_{\PhiN} \Tr{(X^\T \otimes \one_{\AIO{L+1}{N}}) \PhiN} \\
		\textbf{Subject to:}\quad & \PhiN \ge 0,\\
		&\PpN \PhiN = \PhiN, \;\; \Tr{\PhiN} = 1, \\
		&\TrP{\SO_1}{\PhiN} = \frac{\one_{\SI_1}}{\dES} \otimes \TrP{\SA{1}}{\PhiN}.
	\end{split}
\end{equation}
Here, $N \ge L$ is the size of the symmetric state $\PhiN$ used in the approximation of the separable $L$-partite state $\PhiL$. The last constraint in \cref{eq:finalSDPmain} is required to ensure that $\PhiN$ represents a sequence of trace-preserving maps, and it can be enforced on a single step due to the symmetry constraint on ${\PhiN}$; see \cref{sec:TPcond}. Optionally, one can also add further linear constraints to the SDP to improve its approximation to the separable set, e.g., entanglement witnesses~\cite{terhal2000,guhne2009} or the positive partial transpose (PPT) criterion~\cite{peres1996ppt} ${\PhiN^{\Ta} \ge 0, \forall \alpha}$, where $\PhiN^{\Ta}$ is the partial transpose with respect to a bipartition $\alpha$. These conditions are not necessary for convergence, but they may provide better results~\cite{navascues2009} at an extra computational cost.

Once approximate solutions $\tilde{\omega}^{\va,N}_{\dE}$ are obtained from \cref{eq:finalSDPmain}, they can be used to establish a convergent hierarchy of upper bounds on the temporal correlations for each sequence, i.e.,
\begin{align}\label{eq:hierarchybounds}
	\begin{split}
	&\omega^{\va}_{\dE} \le \cdots \le \tilde{\omega}^{\va,N+1}_{\dE} \le \tilde{\omega}^{\va,N}_{\dE} \le \cdots \le \tilde{\omega}^{\va,L}_{\dE}, \\
	&\text{with}\quad \lim_{N \to \infty} \tilde{\omega}^{\va,N}_{\dE} = \omega^{\va}_{\dE}.
	\end{split}
\end{align}
Consequently, the above formulates a sequence of SDPs capable of approximating the maxima $\omega_{\dE}^{\va}$ to arbitrary precision. Numerical results obtained from \emph{any} of these SDPs (i.e., for any $N \geq L$) can be used to construct dimensional witnesses, which, when violated, certify the minimum dimension of the effective environment interacting with the system.
A software implementation of the SDP in \cref{eq:finalSDPmain}, however, is not straightforward even for small values of $\{\dS, \dE, L, N\}$, as the memory requirements quickly render the problem computationally intractable. Obtaining the numerical results presented in the next section, thus, required a significant amount of optimization; see \cref{sec:implementation} for details. A schematic outline of all steps undertaken to formulate the SDP and obtain the numerical results is presented in \cref{fig:relaxations}.

\begin{figure*}[t]\centering
	\includegraphics[width=0.65\linewidth]{"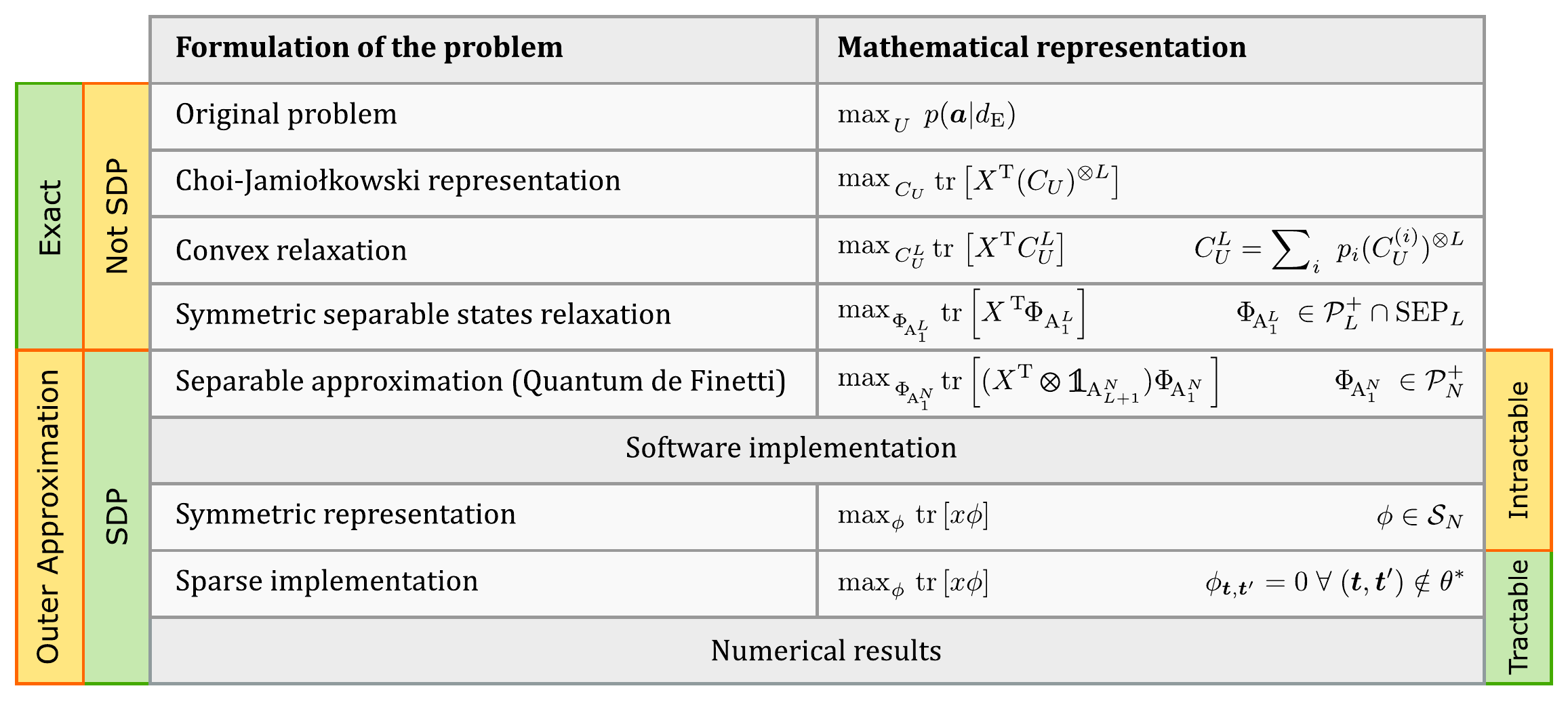"}
	\caption{Schematic of all steps undertaken for computing an upper bound for $\omega_{\dE}^{\va}$ by formulating and solving the SDP problem. Detailed descriptions of each step are covered in \cref{app:theSDP,app:implementation}. The tractable/intractable labels, on the right, refer to the case $\dE = 2$.}
	\label{fig:relaxations}
\end{figure*}

The asymptotic convergence of the hierarchy of bounds in \cref{eq:hierarchybounds} for the SDP in \cref{eq:finalSDPmain}, without PPT constraints, is given by~\cite{chiribella2011}
\begin{equation}
	|\omega^{\va}_{\dE} - \tilde{\omega}^{\va,N}_{\dE}| \le \frac{L(L+(\dES)^2+1)}{N + \dES}.
\end{equation}
If PPT constraints are included, we only have partial analytical results for the asymptotic error. Numerical evidence and previous results involving PPT constraints~\cite{navascues2009} lead us to conjecture an asymptotic scaling of the form
\begin{equation}
	|\omega^{\va}_{\dE} - \tilde{\omega}^{\va,N}_{\dE}| \le f(L,\dES) \, O\left(\frac{1}{N^2}\right),
\end{equation}
where $f(L,\dES)$ is a function of $L$ and $\dE$. We leave the analysis of these asymptotic error bounds to a future investigation.

%%%%%%%%%%%%%%%%%%%%%%%%%%%%%%%%%%%%%%%%%%%%%%%%%%%%%%%%%%%%%%%%%%%%%%%%%%%%%%%%%%%%%%%%%%%%%%%%
\section{Numerical results}\label{sec:numerics}

In this section, we discuss the numerical results we obtained for $\tilde{\omega}^{\va,N}_{\dE}$. The SDPs were run with CVXPY~\cite{diamond2016cvxpy,agrawal2018rewriting} using the solver SCS~\cite{scspaper,scssoft}, on a compute server with an Intel Xeon Gold 5218 24-core processor at 2.294 GHz, and with 128 GB of RAM. For $\dE = 1$, optimization time was in the order of seconds, whereas $\dE = 2$ scenarios required from 4 up to 14 hours to complete.

For the explicit computation, we chose the following measurement protocol: $\cA = \{0,1\}$, $\rhoEz = \ketbra{0}{0}_E$,  $\dS = 2$, $\rhoSz = \ketbra{0}{0}_S$ and $E_S^a = \ketbra{a}{a}_S$, so that
\begin{equation}\label{eq:projMa}
	\cM_a(\rhoES) = \Tr{ \rhoES \cdot \one_E \otimes \ketbra{a}{a}_S} \otimes \ketbra{0}{0}_S.
\end{equation}
For this particular choice, the bounds $\omega^{\va}_{\dE}$ are the same as for the isolated system case studied in \cite{vieira2022temporal}, where explicit time evolutions were found through gradient descent techniques.

The analytical maxima for $\dE = 1$ described in \cref{sec:opensys} served as a useful test for the soundness of our approach. We ran the SDP for all binary sequences (up to $0 \leftrightarrow 1$ relabeling symmetry), for $L = 2, 3, 4$, $N = L, L+1$. Every case was tested with and without the additional PPT constraints for comparison. Results for $L = 2, 3$ are shown in \cref{tbl:trivialSDPresults}, in which it can be observed that, for $\dE = 1$, either the PPT constraints or a symmetric extension of a single extra system appear to be sufficient for achieving the exact analytical maximum.

\begin{table}
	\renewcommand{\arraystretch}{1.2}
	\centering
	\begin{equation*}\footnotesize
		\begin{array}{c|c|l|c|c|c}\hline
			L & N & \va & \text{Without PPT} & \text{With PPT} & \omega^{\va}_{1} \\
			\hline
			2 & 2 & 00  & 0.999996 & 0.999999 & 1 \\
			2 & 3 & 00  & 1.000000 & 1.000000 & 1 \\ \hline
			2 & 2 & 01  & 0.500000 & 0.250002 & 1/4 \\
			2 & 3 & 01  & 0.250007 & 0.250001 & 1/4 \\ \hline
			3 & 3 & 000 & 1.000010 & 1.000000 & 1 \\
			3 & 4 & 000 & 1.000000 & 0.999991 & 1 \\ \hline
			3 & 3 & 001 & 0.250005 & 0.148149 & 4/27 \\
			3 & 4 & 001 & 0.148148 & 0.148148 & 4/27 \\ \hline
			3 & 3 & 010 & 0.250005 & 0.148149 & 4/27 \\
			3 & 4 & 010 & 0.148148 & 0.148148 & 4/27 \\ \hline
			3 & 3 & 011 & 0.250005 & 0.148149 & 4/27 \\
			3 & 4 & 011 & 0.148148 & 0.148148 & 4/27 \\ \hline
			\hline
		\end{array}
	\end{equation*}
	\vspace{-1.5em}
	\caption{Results of the SDP for $\dE = 1$, compared with the analytical maxima $\omega^{\va}_{1}$. Note that $4/27 = 0.\overline{148}$. It can be seen that either PPT constraints or an extension of one extra system (i.e. $N = L+1$) was sufficient to achieve the analytical maximum in this case. Similar results were obtained for $L=4$, omitted here for conciseness.}
	\label{tbl:trivialSDPresults}
\end{table}

We have also solved the SDP for the case $\dS =\dE = 2$, for various sequences without a trivial maximum probability (i.e., $\omega^{\va}_{\dE} < 1$), but only without the additional PPT constraint, as the requirements needed for its addition would have exceeded the available memory. Additionally, since the involved SDPs are only numerically tractable for $N \approx L$, and no convergence was observed for the values we managed to compute, we can only claim to have obtained upper bounds for the maximum $\omega_{\dE}^{\va}$, as in \cref{eq:hierarchybounds}.

\begin{table}
	\renewcommand{\arraystretch}{1.2}
	\centering
	\begin{equation*}\footnotesize
	\begin{array}{c|c|l|c|c}\hline
	L & N & \va & \text{Upper bound (SDP)} & \text{Achievable value (GD)} \\
	\hline
	3 & 3 & 001  & 0.683477 & 0.437341 \\
	3 & 4 & 001  & 0.521219 & 0.437341 \\
	\hline
	4 & 4 & 0010 & 0.512220 & 0.437142 \\
	\hline
	4 & 4 & 0011 & 0.487058 & 0.362047 \\
	\hline
	4 & 4 & 0100 & 0.494499 & 0.333147 \\
	\hline
	4 & 4 & 0110 & 0.488837 & 0.361968 \\
	\hline
	4 & 4 & 0001 & 0.492088 & 0.300545 \\
	\hline
	\hline
	\end{array}
	\end{equation*}
	\vspace{-1.5em}
	\caption{Comparison between SDP upper bounds and best achievable values known for the maximum $\omega_{\dE}^{\va}$, for $\dE = 2$. Best known values were obtained through gradient descent (GD)~\cite{vieira2022temporal} and act as lower bounds on the maximum, but are not suitable as witnesses. All results are in agreement, and for $\va = 001$, we observe convergence of the bounds as $N$ increases.}
	\label{tbl:nontrivialSDPresults}
\end{table}

As our choice of probe state and measurements (\cref{eq:projMa}) for $\dS = \dE = 2$ reproduces the quantum scenarios previously investigated in \cite{vieira2022temporal}, as we show in \cref{sec:detconds}, we can compare our upper bounds against the largest known achievable values, obtained through explicit time-evolutions found via gradient descent techniques. This allows us to estimate the range of values containing the maxima $\omega_{2}^{\va}$. These comparisons are shown in \cref{tbl:nontrivialSDPresults}, where, for instance, we see $0.437341 \le \omega_{2}^{001} \le 0.521219$. As a concrete example of a violation of this bound, for $\dS = 2$ and $\dE = 3$, we may construct a unitary in $\cHES$ as follows:
\begin{align}\begin{split}\label{eq:detU}
	U &= \ketbra{1}{0}_E \!\otimes\! \ketbra{0}{0}_S + \ketbra{2}{1}_E \!\otimes\! \ketbra{0}{0}_S +  \ketbra{2}{0}_E \!\otimes\! \ketbra{1}{1}_S \\
	&+ \ketbra{1}{1}_E \!\otimes\! \ketbra{1}{1}_S + \ketbra{0}{2}_E \!\otimes\! \ketbra{0}{1}_S + \ketbra{0}{2}_E \!\otimes\! \ketbra{1}{0}_S.
\end{split}\end{align}
For the $\rhoEz$, $\rhoSz$ and $E_S^a$ we have chosen, the above unitary gives $p(001|\dE=3) = 1 > \omega_{2}^{001}$, implying that the bound found for $\dE = 2$ is not only non-trivial, but actually witnesses environment dimension since it can indeed be violated by means of a larger environment ($\dE=3$ in this case). In fact, this example certifies that $\omega_{3}^{001} = 1$; see also  \cref{sec:detconds}.

%%%%%%%%%%%%%%%%%%%%%%%%%%%%%%%%%%%%%%%%%%%%%%%%%%%%%%%%%%%%%%%%%%%%%%%%%%%%%%%%%%%%%%%%%%%%%%%%
\section{Implementation}\label{sec:implementation}

This section outlines the concrete steps we have taken to relax the original problem to a hierarchy of SDPs and to make these SDPs numerically tractable; see \cref{app:implementation} for more details. As noted previously, a straightforward implementation is not computationally tractable, even for short sequences, due to the large number of variables and constraints involved.

We have addressed these additional numerical challenges by exploiting several properties of the problem. Firstly, the symmetric constraint imposed on $\PhiN$ can be satisfied automatically by expressing $\PhiN$ in terms of a basis for the symmetric subspace in the numerical implementation of the SDP. While this provides a significant reduction in the total number of involved variables and constraints, it is by itself insufficient to make the problem tractable.

However, thanks to our specific choice of initial states and measurements, the sparsity of the objective function could be exploited to further reduce the number of variables and constraints, by eliminating all of those that do not affect the objective, either directly or indirectly. This relaxation generally results in a sparse outer approximation of the original problem, i.e., to a value $\Omega^{\va,N}_{\dE} \geq \tilde{\omega}^{\va,N}_{\dE}$. In our particular case, as we show in \cref{app:impsparse}, this elimination procedure is exact, thus yielding an optimal value $\Omega^{\va,N}_{\dE} = \tilde{\omega}^{\va,N}_{\dE}$.

\subsection{Choice of parameters}

Before a concrete implementation, one must first choose the initial environment state $\rhoEz$, as well as fixing the probe state and measurements. As the objective function is convex on $\rhoEz$, and the optimization is over all unitaries, we may -- without loss of generality -- fix a pure initial state $\rhoEz = \ketbra{0}{0}_E$ and eliminate any assumptions on the environment state; see \cref{sec:environment}.
On the other hand, as assumptions on $\dS$, $\rhoSz$ and $(E_S^a)_a$ are experiment-dependent, one must search for ``proper'' probe states and measurements on a case-by-case basis, but optimal choices, leading to a larger bound, can also be addressed in general terms; see \cref{sec:detconds}. For the choice of sequence, in particular, if it is too simple relative to $\dE$, the maximum attainable probability may be trivial (i.e., equal to one), such that no optimization is required, and no witness can be constructed. Therefore, it is important to choose sequences which have a non-trivial maximum for a given $\dE$. For closed systems, this problem has been solved via the notion of ``deterministic complexity'' \cite{vieira2022temporal}, which defines the minimum requirements for a sequence to be able to occur deterministically. In the open system case, the conditions for determinism involve not only the available memory, as was the case in the isolated system, but also the dimension of the system. We elaborate on these conditions in \cref{sec:detconds}.

\subsection{Symmetric representation}

Effectively solving any of the SDPs in the hierarchy requires a significant amount of optimization, as a naive implementation quickly becomes computationally intractable. As a rough example, without any simplifications, $\PhiN$ is a square matrix of size $(\dES^2)^N$, which in the simplest non-trivial scenario, $\dE = \dS = 2$ and $N = L = 3$, already results in a $4096 \times 4096$ matrix, with over 16 million complex scalar variables. The partial trace constraint alone involves over 1 million linear equations between these variables, making the SDP numerically intractable in this na\"ive formulation.

In practice, several simplifications can be made; see \cref{app:implementation} for full details. $\PhiN$ and $X \otimes \one_{\AIO{L+1}{N}}$ may be written directly in terms of an operator basis for the symmetric subspace, thus automatically satisfying the symmetry constraint, which significantly reduces the number of variables. Defining an isometry between the symmetric subspace of $\AIO{1}{N}$, denoted by $\cS_N$ (with dimension $\dim \cS_N$), and a canonical basis for $\cS_N$,
\begin{equation}
	V = \sum_{\vt} \ket{\vt}_{\cS_N} \bra{\text{sym}(\vt)}_{\AIO{1}{N}},
\end{equation}
we may cast all involved objects directly as ${\dim \cS_N \times \dim \cS_N}$ matrices:
\begin{align}\label{eq:symversion}
	\begin{split}
	\PhiN &= \sum_{\vt,\vt'} \phi_{\vt,\vt'} \ketbra{\text{sym}(\vt)}{\text{sym}(\vt')}_{\AIO{1}{N}}, \\
	\phi &= V \PhiN V^\dagger, \; \text{and} \\
	x &= V (X^\T \otimes \one_{\AIO{L+1}{N}}) V^\dagger,
	\end{split}
\end{align}
where $\vt$ denotes a ``type'' for the canonical representation of symmetric states $\ket{\text{sym}(\vt)}_{\AIO{1}{N}}$; see \cite{harrow2013church}. Under this representation, the objective function becomes $\Tr{x \phi}$, and the constraints may be formulated in terms of $\phi_{\vt,\vt'}$ directly, without resorting to the the full space $\AIO{1}{N}$. Note that $V V^\dagger = \one_{\cS_N}$ but $V^\dagger V \ne \one_{\AIO{1}{N}}$ (rather, $V^\dagger V$ is a projector onto $\cS_N$). As the dimension of $\cS_N$ is~\cite{harrow2013church}
\begin{equation}
	\dim \cS_N = \binom{N + \dES^2 - 1}{N},
\end{equation}
for $\dES=4$ and $N=3$ the matrix $\phi$ is of size $816 \times 816$, with around 600 thousand variables: a reduction to 4\% of the original 16 million. With further work, the partial trace constraints can also be efficiently written directly in this representation, leading to $( \dES \cdot \binomial{(N-1) + \dES^2 - 1}{N-1} )^2$ equality constraints, resulting in approximately 300 thousand constraints in the $\dES=4$ and $N=3$ scenario: a reduction to 28\% of the original. This symmetric representation is explained in more detail in \cref{app:impsym}.

\subsection{Sparse implementation}

While these are significant improvements, they still prove to be insufficient, as the dense representation of the symmetric problem involves too many variables and constraints to be solvable. As a concrete example, for the simplest case of $\va = 001$, $\dE=2$ and $L=N=3$, the SDP solver SCS attempts to allocate a dense array of over 3 TB in size.

To address this, we have developed an algorithm that exploits any sparsity of the objective function, and -- whenever possible -- constructs a sparse relaxation of the original problem. This is achieved by iteratively selecting which variables of $\phi$ and which of its constraints are strictly necessary to solve the problem, due to their direct or indirect influence in the objective function. While generally a relaxation, due to the explicit structure of our problem this sparse implementation is exact, yielding $\tilde{\omega}^{\va,N}_{\dE}$. A detailed explanation of our algorithm is available in \cref{app:impsparse}.

For the state and measurements considered, our technique was capable of reducing the number of variables and constraints immensely, to less than 1\% of the symmetric case (see \cref{tbl:symsparsecomparison}). This allowed us to successfully compute upper bounds for various sequences up to $N = 4$ and $\dE = 2$ (\cref{tbl:nontrivialSDPresults}).

\begin{table}\footnotesize
	\renewcommand{\arraystretch}{1.2}
	\centering
	\begin{tabular}{c|c|r|r|c} \hline
		$N$                  &                      & Symmetric & Sparse & Reduction to \\ \hline
		\multirow{2}{*}{3} & \multicolumn{1}{l|}{\textbf{Variables}}  & 665\,856                         & 3\,566                        & 0.54\%    \\
		& \multicolumn{1}{l|}{\textbf{Constraints}} & 295\,937                         & 2\,809                        & 0.95\%    \\ \hline
		\multirow{2}{*}{4} & \multicolumn{1}{l|}{\textbf{Variables}}  & 15\,023\,376                       & 35\,688                       & 0.24\%    \\
		& \multicolumn{1}{l|}{\textbf{Constraints}} & 10\,653\,697                       & 40\,441                       & 0.38\% \\
		\hline \hline
	\end{tabular}
	\caption{Comparison between the number of variables and linear constraints in the symmetric problem vs. its sparse implementation, for the sequence $\va = 001$. Our algorithm achieves a vast reduction in the number of variables and constraints, while still allowing exact solutions for our problem. Note that these numbers precede any further optimization, which could remove any remaining redundant constraints or variables.}
	\label{tbl:symsparsecomparison}
\end{table}

As can be observed in \cref{tbl:symsparsecomparison}, the sparse problems can clearly be simplified further, by eliminating redundant variables and constraints. We opted for leaving such task to the numerical pre-solver, as this only took a few minutes of computing time. Solving the SDP for $N = 3$ was possible within minutes, but for $N = 4$, from 4 up to 14 hours were needed, depending on the sequence.

%%%%%%%%%%%%%%%%%%%%%%%%%%%%%%%%%%%%%%%%%%%%%%%%%%%%%%%%%%%%%%%%%%%%%%%%%%%%%%%%%%%%%%%%%%%%%%%%
\section{Discussion}\label{sec:discussion}

\subsection{Repeated unitaries}\label{sec:repunitaries}

In the following, we discuss in which cases the condition of repeated unitaries is physically justified. To see this, imagine that the evolution of the system is governed by a time-independent Hamiltonian $H_{\rm{SE}}$. The corresponding unitaries will be of the form $U(t) = e^{-iHt}$. This assumption of time-independence is always possible, as the environment may include any source of time dependence. Ideally, then, we would have the same unitary if we perform the measure-and-prepare operations equally spaced in time, say, by an interval $\Delta t$. One should take into account, however, some uncertainty in the time measurement. Let us then assume that our choice of time for performing a measurement is distributed according to a distribution $q(t)$, centered around $\Delta t$. This means that instead of transforming our system according to the unitary $U(\Delta t)$, we transform it, on average, according to the mapping $\Lambda := \int \cU(t) q(t) dt$, seemingly breaking our assumption of repeated unitaries. Since $\Lambda$ is a valid completely positive trace preserving (CPTP) map, it can be dilated by means of a larger environment into a unitary $\cU'$, namely $\Lambda(\rho_{\rm SE})=\TrP{{\rm E}_1}{\cU'(\rho_{\rm SE} \otimes \ketbra{0}{0}_{{\rm E}_1})}$. To complete the argument that this situation can still be described by repeated unitaries, it is enough to show that the same can be done for multiple copies of it, namely, that the repeated operation $\Lambda^L$ can be dilated to some unitary $\widetilde\cU^L$. To do so, it is enough to provide at each time-step $i$ a new environment ${\rm E}_i $ prepared in the correct initial state $\ketbra{0}{0}_{{\rm E}_i}$. Let us compute explicitly the case for two time steps, the general case is straightforward. We define $W_{\rm{E}_1\rm{E}_2}$ the swap operator between systems ${\rm E}_1$ and ${\rm E}_2$, and we set $\widetilde{\cU}:= W_{\rm{E}_1\rm{E}_2} \circ \, \cU_{\rm{SEE}_1}'$. Let us calculate, for simplicity
\begin{equation}
\begin{split}
\TrP{\rm{E}_1 \rm{E}_2}{ \cU_{\rm{SEE}_1}' \!\!\circ\! W_{\rm{E}_1\rm{E}_2} \circ \cU_{\rm{SEE}_1}'(\rho_{\rm{SE}} \!\otimes\! \ketbra{0}{0}_{\rm{E}_1} \!\otimes\! \ketbra{0}{0}_{\rm{E}_2}) }\\
=\TrP{\rm{E}_1 \rm{E}_2}{  \cU_{\rm{SEE}_1}' \!\!\circ\! W_{\rm{E}_1\rm{E}_2} (\sigma_{_{\rm{SEE}_1}} \!\otimes\! \ketbra{0}{0}_{\rm{E}_2}) }\\
=\TrP{\rm{E}_1 }{  \cU_{\rm{SEE}_1}'( \TrP{\rm{E}_2}{\sigma_{_{\rm{SEE}_2}}} \!\otimes\! \ketbra{0}{0}_{\rm{E}_1}) }\\
=\TrP{\rm{E}_1 }{  \cU_{\rm{SEE}_1}' (\Lambda(\rho_{\rm{SE}})\otimes \ketbra{0}{0}_{\rm{E}_1}) }=\Lambda^2(\rho_{\rm{SE}}),
\end{split}
\end{equation}
where $\sigma_{\rm{SEE}_1}:=\cU_{\rm{SEE}_1}'(\rho_{\rm{SE}}\otimes \ketbra{0}{0}_{\rm{E}_1})$. The effect of the final $W_{\rm{E}_1\rm{E}_2}$ operation is to swap the space $\rm{E}_1$ and $\rm{E}_2$ that are irrelevant after the operations on $\rm{SE}$ have been performed. Evidently, this argument can straightforwardly be extended to more time steps.
In summary, this means that the condition of ``the same unitary'' is satisfied as long as the time choice is always drawn from the same distribution. In other words, it is sufficient to always ``probabilistically repeat'' the same operation.

Finally, we remark that even if this procedure may use a very large environment, we are only interested in showing that we can always assume there \emph{is} a unitary dynamics, with the notion of an effective environment, then, taking care of estimating the environment consistent with the observed statistics. Moreover, we also remark that time-independent operations are necessary, as unrestricted time-dependent operations can achieve arbitrarily long temporal correlations even with bounded memory size~\cite{mao2022strucdimbound}.

\subsection{Effective environment and initial state}\label{sec:environment}

The environment of a quantum system, intended as all the physical systems surrounding and possibly interacting with it, is typically a very high-dimensional system, if not directly assumed to be infinite-dimensional. From this perspective, we want to make sense of the notion of \emph{effective} environment. Consider a global transformation of the system and environment. As a first approximation, we can say that the global unitary is of the form $U_{\text{SE}}\otimes U_{\text{E}'}$, where $U_{\text{SE}}$ is an entangling unitary between the system and the effective environment and $U_{\text{E}'}$ is acting on the rest of the environment.
At the same time, an evolution of the form $U_{\text{SE}}\otimes U_{\text{E}'}$ is just an approximation of the full evolution of the environment, as we expect its state to thermalize after some time interval. Nevertheless, from a physical perspective this approximation is still valid if the time required for a single run of the experiment, i.e., the measurement of the temporal sequence, is much shorter than the time needed for the effective environment to thermalize. This is to be expected, as the environment is composed of many particles that may interact with each other with different strengths. Underlying this expectation is the assumption -- known as Markovian embedding~\cite{budini_embedding_2013, xue_quantum_2015, xue_modeling_2020} and frequently employed in the description and modeling of open quantum system dynamics~\cite{tamascelli_nonperturbative_2018, luchnikov_dimension_2018, luchnikov_simulation_2019, luchnikov_probing_2022} -- that the environment can always be split into two parts: a far environment, leading to irretrievable, memoryless information loss, and an effective environment, that can transport memory. Due to the irretrievable information loss, the dynamics of the system and the effective environment is then described by a general (non-unitary) CPTP map, a situation which, as discussed in the previous section, can again be dilated to repeated unitaries.

Reversing this perspective, namely, looking at the problem of characterizing the effective environment from temporal correlation experiments, we may say that this characterization is inherently dependent on the typical time-scales of a single experimental run.
This difference in time scales and the eventual thermalization of the environment, however, are essential for repeating the experiment for collecting statistics, as it is required that the initial state of the environment is always (probabilistically) described by the same state $\rhoEz$ in each experimental run. As previously mentioned, this is typically a thermal state, but it does not need to be characterized in our approach. In fact, since the SDP maximizes the temporal correlations, which are linear in the initial state of the environment, we know that the maximum is always achieved with a pure state. Up to local unitaries, we can thus always assume it to be $\rhoEz = \ketbra{0}{0}_E$. The bound calculated for this state is, then, valid for any possible initial state of the environment.

Finally, somewhat independent of the explicit experimental situation, we may see our setup as a question of simulation resources: What is the smallest dimension an environment coupling to a known system must have in order to reproduce observed statistics in a unitary way? Seen in this way, the results we present attribute a ``simulation hardness'' (in the sense of required environment dimension) to each sequence, that is agnostic with respect to concrete time scales or experimental limitations, but rather inherent to the respective sequence.

% ~~~~~~~~~~~~~~~~~~~~~~~~~~~~~~~~~~~~~~~~~~~~~~~~~
\subsection{Conditions for deterministic realizations}\label{sec:detconds}

The maximum probability for a given sequence increases as more memory is available, i.e.,
\begin{equation}
0 \le \omega_{1}^{\va} \le \omega_{2}^{\va} \le \cdots \le \omega_{\dE}^{\va} \le 1,
\end{equation}
as larger environments can always simulate the dynamics of smaller ones. Therefore, if the sequence $\va$ is sufficiently simple, its maximum may be trivial for a given $\dE$, i.e., $\omega^{\va}_{\dE} = 1$.

The maxima $\omega^{\va}_{\dE}$ depend on the sequence, as not all sequences are equally ``difficult'' to produce with a given amount of memory $\dE$. For example, it is easy to see that the sequence `$000$' can \emph{always} be produced with unit probability, while the sequence `$001$' cannot be produced with unit probability if the environment is not of sufficient size~\cite{vieira2022temporal}. This observation suggests a relevant notion of ``complexity'' of sequences, which offers a more fundamental relationship between $\va$ and $\dE$. Since our goal is to establish non-trivial maxima on the probabilities of sequences, the natural question to ask is how large should $\dE$ be such that $\va$ can occur deterministically?

Understanding the conditions where this occurs allows us to pick only scenarios featuring non-trivial maxima. It is useful to recall the following
\begin{definition}(Deterministic Complexity \cite{vieira2022temporal}) Let $\va$ be a sequence of measurement outcomes, produced by repeated identical measurements on an isolated $d$-dimensional system. The \emph{deterministic complexity} of the sequence $\va$, or $\DC(\va)$, is the minimum $d$ such that $p(\va|d) = 1$ can occur.
\end{definition}
$\DC(\va)$ is a property of the sequences $\va$ and it is independent of the model (quantum or classical). 
\begin{figure}\centering
	\includegraphics[width=0.85\linewidth]{"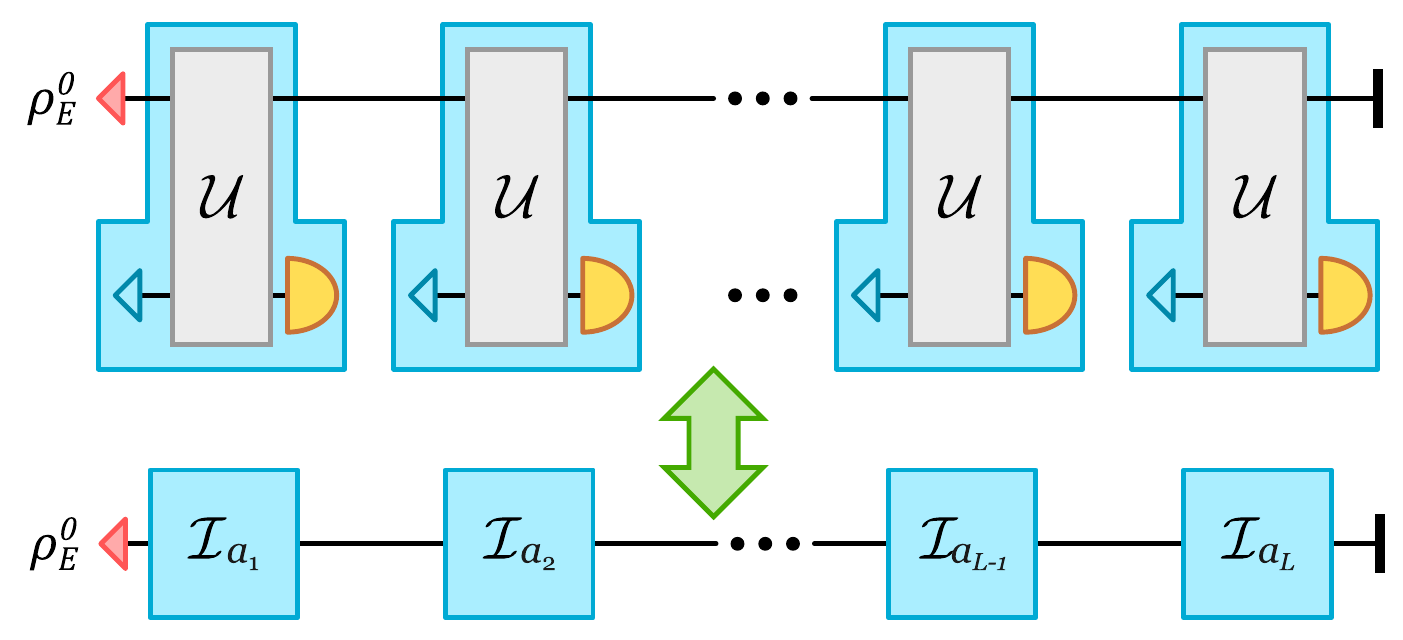"}
	\caption{The open system and environment scenario can be equivalently interpreted as the dilation of the scenario involving sequential measurements $\cI_a$ on an isolated environment system.}
	\label{fig:protocoldilation}
\end{figure}
First, we establish a correspondence between the single system and open system scenarios. Let $\dS = |\cA|$, so that we may choose $E_S^a = \ketbra{a}{a}_S$ and $\rhoSz = \ketbra{0}{0}_S$. Then, the changes in the environment state due to the unitary and measurements can be written succinctly in terms of the completely positive maps
\begin{align}\label{eq:DCIa}
	\cI_a(\rhoE) &= \TrP{S}{U (\rhoE \!\otimes\! \ketbra{0}{0}_S) U^\dagger \!\cdot\! \one_E \otimes \ketbra{a}{a}_S}.
\end{align}
Therefore, under the action of $(\cI_a)_{a \in \cA}$, we may interpret the \emph{environment} as a single isolated ``memory'' system in which the maps $\cI_a$ act sequentially, as in \cite{vieira2022temporal}. The converse mapping, from the sequential measurements on an isolated memory system to the open system scenario, is the dilation map with the probe system acting as the ancilla and the environment as the memory. This construction is well known, but it is worth expanding it in detail in order to understand what are the conditions on the ancilla (i.e., the probe system) needed to implement it.

Let $n(\va)$ denote the number of unique symbols appearing in $\va$, with $n(\va) \le |\cA|$, and let us assume $\dS \geq n(\va)$. Now, consider an instrument given by the maps $\tilde{\cI}_a$ of the form $\tilde{\cI}_a(\rhoE) = K_a \rhoE K_a^\dagger$, for $(K_a)_{a\in\cA}$ Kraus operators, i.e., $\sum_{a \in \cA} K_a^\dagger {K_a} = \one_E$. The corresponding unitary arises from the $\dE \dS \times \dE$ isometry matrix
\begin{equation}
	Q = \sum_{a \in \cA} K_a \otimes \ket{a}_S,
\end{equation}
which can always be completed into a unitary matrix $U$. Thus $\tilde{\cI}_a$ can be written in the form of \cref{eq:DCIa}. As we have chosen $\dS = |\cA|$, $\rhoSz = \ketbra{0}{0}_S$ and $E_S^a = \ketbra{a}{a}_S$ in our implementation of the SDP (\cref{sec:implementation}), there is a direct correspondence between the two scenarios; see \cref{fig:protocoldilation}. Therefore, the upper bounds obtained by the SDP in \cref{eq:finalSDPmain} can be compared with the known achievable values from \cite{vieira2022temporal}, as shown in \cref{tbl:nontrivialSDPresults} and discussed in \cref{sec:numerics}.

This construction applies to deterministic models as they have deterministic transitions between states, i.e., each is described by a single Kraus operator per outcome. We thus have
\begin{proposition}If both $\dE \ge \DC(\va)$ and $\dS \ge n(\va)$, then there is a choice of probe state and measurements such that $\omega^{\va}_{\dE} = 1$.
\end{proposition}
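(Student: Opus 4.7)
The plan is to explicitly construct a probe state, measurement, and system-environment unitary that simulate, on the environment alone, a deterministic realization of $\va$ guaranteed by $\dE \geq \DC(\va)$. This is precisely the Stinespring-style dilation sketched in the paragraph preceding the proposition, carried out with the outcome alphabet shrunk to the $n(\va)$ symbols actually appearing in $\va$.

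First I would invoke the definition of $\DC(\va)$ together with $\dE \geq \DC(\va)$ to obtain Kraus operators $(K_a)_a$, indexed by the $n(\va)$ distinct symbols of $\va$ and acting on $\cHE$, with $\sum_a K_a^\dagger K_a = \one_E$ and an initial environment state $\rhoEz$, such that the associated instrument applied sequentially yields $\va$ with probability one. Using $\dS \geq n(\va)$, I would then pick an orthonormal family $\{\ket{a}_S\}_a$ in $\cHS$ indexed by those symbols, set $\rhoSz = \ketbra{0}{0}_S$ for a fixed $\ket{0}_S$ (taken among the $\ket{a}_S$ or orthogonal to them if $\dS$ leaves room), and define the POVM $E_S^a = \ketbra{a}{a}_S$, completed by $\one_S - \sum_a \ketbra{a}{a}_S$ if needed.

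The core step would be to build the unitary: define $Q$ on $\cHE \otimes \ket{0}_S$ by $Q(\ket{\psi}_E \otimes \ket{0}_S) = \sum_a K_a \ket{\psi}_E \otimes \ket{a}_S$. Orthogonality of the $\ket{a}_S$ combined with $\sum_a K_a^\dagger K_a = \one_E$ would make $Q^\dagger Q = \one_E$, so $Q$ is an isometry and extends to a unitary $U$ on $\cHES$ (its domain has dimension $\dE$, well inside $\dES$). A short computation would then show that $\cM_a \circ \cU$ acts on $\rho_E \otimes \rhoSz$ as $K_a \rho_E K_a^\dagger \otimes \rhoSz$ up to normalization, so iterating $L$ times reproduces the deterministic realization of $\va$ and hence $p(\va|\dE) = 1$.

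The one subtlety I expect to pin down is the alphabet bookkeeping: a naive reading of the dilation demands $\dS \geq |\cA|$, but the definition of $\DC(\va)$ lets us take as alphabet precisely the symbols actually appearing in $\va$, since any extra outcomes may be packaged into a single ``other'' POVM element on the orthogonal complement without affecting the sequence's probability. Once this alphabet restriction is made, the dimension requirement drops to $\dS \geq n(\va)$, and everything else is routine linear algebra.
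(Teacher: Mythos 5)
Your proposal is correct and follows essentially the same route as the paper: extract single Kraus operators per outcome from the deterministic realization guaranteed by $\dE \ge \DC(\va)$, dilate via the isometry $Q = \sum_a K_a \otimes \ket{a}_S$ completed to a unitary, and use $\rhoSz = \ketbra{0}{0}_S$ with projective $E_S^a = \ketbra{a}{a}_S$ so that $\cM_a \circ \cU$ reproduces $K_a \cdot K_a^\dagger$ on the environment. Your explicit handling of the alphabet restriction to the $n(\va)$ symbols actually occurring (packaging the rest into a complementary POVM element) is a point the paper treats only implicitly, but it changes nothing of substance.
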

Note that the condition $\dS \ge n(\va)$ is strictly required for deterministic production of $\va$.
In fact, if $p(\va|\dE) = 1$, every symbol must occur deterministically.
Then, at each step the system must be in a state $\rhoS^a$ such that $\Tr{\rhoS^a E_S^a} = 1$. Thus, the measurements $E_S^a$ must be able to perfectly discriminate between the states $\{\rhoS^a\}_a$, which is possible only if $\dS \ge n(\va)$. We have established:
\begin{proposition}If $\dS < n(\va)$, then $\omega^{\va}_{\dE} < 1$ for any choice of probe state or measurements, and any environment dimension $\dE$.
\end{proposition}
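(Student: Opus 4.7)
The plan is to prove the proposition by contrapositive: assume that for some choice of probe state, measurements, and system--environment unitary $\cU$ one has $p(\va|\dE)=1$, and deduce $\dS \ge n(\va)$. The argument hinges on the structure of deterministic outcomes and is independent of $\dE$, so no knowledge of the environment evolution is needed beyond noting that in the deterministic regime it is well-defined.

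First, I would note that since $p(\va|\dE)$ is a product of conditional probabilities, equality to one forces each outcome $a_\ell$ to occur with unit conditional probability along the observed trajectory. Combined with the re-preparation of $\rhoSz$ after each measurement, this means the pre-measurement system state at step $\ell$ is a well-defined reduced state $\rhoS^{(\ell)} = \TrP{E}{\cU(\rhoE^{(\ell-1)} \otimes \rhoSz)}$, where $\rhoE^{(\ell-1)}$ is the deterministically evolved environment state just before step $\ell$. The deterministic-outcome condition at step $\ell$ then reads $\Tr{\rhoS^{(\ell)} E_S^{a_\ell}} = 1$.

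The core of the argument is to show that, whenever $a_\ell \ne a_{\ell'}$, the supports of $\rhoS^{(\ell)}$ and $\rhoS^{(\ell')}$ are orthogonal. Let $V_a \subseteq \cHS$ denote the eigenvalue-one subspace of the POVM element $E_S^a$; since $\one_S - E_S^a \ge 0$ and $\Tr{\rhoS^{(\ell)}(\one_S - E_S^{a_\ell})} = 0$, the support of $\rhoS^{(\ell)}$ lies in $V_{a_\ell}$. For distinct labels $a \ne a'$, the POVM inequality $E_S^a + E_S^{a'} \le \one_S$ applied to any $|\psi\rangle \in V_a$ gives $\langle\psi|E_S^{a'}|\psi\rangle \le 0$, so $E_S^{a'}|\psi\rangle = 0$ by positivity. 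Using Hermiticity of $E_S^a$, for $|\psi\rangle \in V_a$ and $|\phi\rangle \in V_{a'}$ we then have $\langle\phi|\psi\rangle = \langle\phi|E_S^a|\psi\rangle = \langle E_S^a\phi|\psi\rangle = 0$, hence $V_a \perp V_{a'}$.

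Putting the pieces together, the $n(\va)$ distinct symbols appearing in $\va$ give rise to $n(\va)$ mutually orthogonal non-zero subspaces of $\cHS$, forcing $\dS \ge n(\va)$. The main obstacle I anticipate is the bookkeeping around $\rhoS^{(\ell)}$: one must carefully justify that it is uniquely determined in the deterministic regime before invoking the distinguishability step. Once that is handled, the proof is a short linear-algebra argument on the POVM alone, with no appearance of the environment dimension, matching the $\dE$-independence of the claim.
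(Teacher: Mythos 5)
Your proof is correct and follows essentially the same route as the paper's: unit probability forces each outcome to occur deterministically, so the pre-measurement states must be perfectly discriminated by the POVM, which requires $\dS \ge n(\va)$. You additionally spell out the discrimination step (orthogonality of the eigenvalue-one subspaces $V_a$ via $E_S^a + E_S^{a'} \le \one_S$ and positivity), which the paper leaves implicit; this filling-in is sound.
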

In conclusion, these results tells us that nontrivial bounds appear for $\dE < \DC(\va)$, that we can compare these bound with the single-system scenario for  $\dS \ge n(\va)$, and that $|\cA| = 2$ and $L = 3$ is the smallest scenario displaying non-trivial memory effects.

% ~~~~~~~~~~~~~~~~~~~~~~~~~~~~~~~~~~~~~~~~~~~~~~~~~
\subsection{Choice of sequence}\label{sec:choiceseq}

As briefly noted in \Cref{sec:detconds}, in order to construct a witness for $\dE > d$, it is vital to choose a sequence $\va$ with sufficient deterministic complexity. If the sequence chosen is too short or too simple (i.e., if $\DC(\va) \le d$), then the bound $\omega^{\va}_d$ is trivial and the dimension $\dE$ cannot be tested with such sequence.

However, due to the difficulty in computing these bounds, and the fact longer sequences generally become less probable---and therefore less likely to violate the witness inequality---generally one should choose sequences to be as short as possible while having deterministic complexity of the same size as the minimum environment dimension one wishes to witness.

As a concrete example, if we wish to witness $\dE > 3$, we should choose a sequence $\va$ with deterministic complexity $\DC(\va) = 4$, e.g., $\va = 0001$, which is the shortest sequence satisfying this requirement. Then, a violation of the bound $\omega^{0001}_3$ informs us that, in fact, $\dE \ge 4$.

%%%%%%%%%%%%%%%%%%%%%%%%%%%%%%%%%%%%%%%%%%%%%%%%%%%%%%%%%%%%%%%%%%%%%%%%%%%%%%%%%%%%%%%%%%%%%%%%
\section{Conclusions and outlook}\label{sec:conclusions}

We presented a method to lower bound the dimension of the environment interacting with a probe system, based only on the statistics of measurements performed on the (partially characterized) system, and without any assumption on the environment or the dynamics. This is achieved via a hierarchy of semidefinite programs that upper bound temporal correlations achievable in various experimental scenarios, under the assumption of finite memory. Such bounds can be applied to the detection of the effective environment size in the dynamics of open systems, as well as a certification of the minimum size of an environment's dimension compatible with observations.

To keep the discussion simpler, we applied the optimization for a single sequence. It is straightforward to adapt the objective function to arbitrary linear functions of the full probability distribution $(p(\va|\dE))_{\va}$, as those appearing in the temporal inequalities derived in, e.g., \cite{hoffmann2018,budroni2019memory,spee2020simulating,mao2022strucdimbound}. This may, in principle, lead to better witnesses. We leave the numerical explorations of this problem to a future investigation.

We assumed a joint unitary evolution between system and environment, which leads to a CJ representation of these maps given in terms of symmetric states. As explained in \cref{sec:repunitaries}, this assumption can usually be justified on physical grounds. Nevertheless, a natural question to ask is how do our results change if we consider arbitrary CPTP maps? In that case, the SDP should be modified by replacing the symmetry constraint with permutation invariance, i.e., $V_\sigma \PhiN V^\dagger_\sigma = \PhiN$ for all permutations $\sigma \in \fS_{N}$; see \cref{eq:Vsigma}. We expect that bounds for CPTP maps may be larger than those for unitary channels of the same dimension, especially for objective functions involving more than a single sequence, but numerical optimization is significantly more costly in this case, as the permutation invariant operator basis is of size~\cite[Ch.~7]{watrous2018} $\binomial{N+(\dES)^4-1}{N}$, in contrast to $\binomial{N+(\dES)^2-1}{N}^2$ in the symmetric case. While it is also possible to write such CPTP maps in terms a dilation of the environment, this approach would likely result in more extraneous variables in the SDP, i.e., the terms in the subspace orthogonal to the dilation ancilla's initial state, possibly rendering the optimization intractable.

Additionally, the SDP in \cref{eq:finalSDPmain} and its related implementation techniques are, in fact, quite general, and can be applied to a wide range of scenarios beyond what we have considered here. As the operator $X$ from \cref{eq:Xdef} is fixed, any choice of intermediate operations between each unitary could be chosen. E.g., the initial system and environment states could be correlated, and the intermediate maps $\cM_a$ could each be replaced by arbitrary joint operations, even time-dependent ones. Such approaches could then, for example, be used to bound observables for specific types of processes, e.g., quantum processes with only classical memory~\cite{giarmatzi2021witnessingquantum}, by assuming additional entanglement breaking channels on the environment. Therefore, provided the problem is numerically tractable, our techniques are independent of what explicit measurements are chosen.

While the high dimensionality of the current formulation of the SDP quickly renders general numerical implementations intractable, our approach still offers new avenues for the subject of bounding temporal correlations, and their relationship to open-system dynamics. Ultimately, the techniques developed herein should be taken as a proof-of-concept for future developments and improvements.  It remains to be seen whether more efficient numerical techniques, or even alternative outer approximations, are better suited for addressing such problems. Further investigation of these avenues is subject to future work.

Nevertheless, the success of our approach highlights the wealth of information contained in temporal correlations and the potential of new techniques for characterizing large complex systems by means of a small probe alone, by exploiting non-trivial properties of the temporal correlations achieved by systems of bounded size.

%%%%%%%%%%%%%%%%%%%%%%%%%%%%%%%%%%%%%%%%%%%%%%%%%%%%%%%%%%%%%%%%%%%%%%%%%%%%%%%%%%%%%%%%%%%%%%%%
\section{Acknowledgments}
We thank Huan-Yu Ku, Yelena Guryanova and Joshua Morris for valuable discussions and comments. This work is supported by the Austrian Science Fund (FWF) through projects ZK 3 (Zukunftskolleg), Y879-N27 (START project), P 35810-N and P 36633-N (Stand-Alone), F 7113 (BeyondC), by the European Union's Horizon Europe research and innovation programme under the Marie Sk{\l}odowska-Curie grant agreement No. 101068332, and by grant number FQXi-RFP-IPW-1910 from the Foundational Questions Institute and Fetzer Franklin Fund, a donor advised fund of Silicon Valley Community Foundation.

% ------------------------------------------ APPENDIX --------------------------------------
\appendix
\begin{widetext}

%%%%%%%%%%%%%%%%%%%%%%%%%%%%%%%%%%%%%%%%%%%%%%%%%%%%%%%%%%%%%%%%%%%%%%%%%%%%%%%%%%%%%%%%%%%%%%%%
\section{Constructing the SDP relaxation}\label{app:theSDP}

This appendix explains in detail the various steps to analytically formulate the maximization problem $\omega_{\dE}^{\va} := \max_U p(\va|\dE)$ as a semidefinite program, with the next appendix (\cref{app:implementation}) focusing on its software implementation. \Cref{fig:relaxationsapp} provides a schematic outline of our approach.

\begin{figure}[H]\centering
	\includegraphics[width=0.75\linewidth]{"assets/relaxations.pdf"}
	\caption{Schematic of all steps undertaken for computing an upper bound for $\omega_{\dE}^{\va}$ by formulating and solving the SDP problem. The tractable/intractable labels, on the right, refer to the case $\dE = 2$.}
	\label{fig:relaxationsapp}
\end{figure}

Let $\ell = 1, \dots, L$ enumerate the time steps, and let $\SI_\ell$ and $\SO_\ell$ be the input and output spaces of the $\ell$-th unitary evolution $\cU$, respectively. For convenience, we write $A_\ell := \SI_\ell \otimes \SO_\ell$ for the joint space at step $\ell$, and $\AIO{a}{b} := A_a \otimes A_{a+1} \otimes \cdots \otimes A_{b-1} \otimes A_b$ for the sequential spaces from $a$ to $b$. We assume that all spaces are isomorphic, i.e., $\SI_\ell \cong \SO_\ell \cong \cHES$ for all $\ell$, but we preserve labels for clarity. We can thus write the respective  maps in the Choi-Jamiołkowski representation (see \cref{eq:cjiso})  as:
\begin{equation}\label{eq:choimatrices}
	C_U := \tfrac{1}{d_{ES}}\cC(\cU), \quad M_a := \cC(\cM_a), \quad \hat{M}_a := \TrP{\cO}{M_a},
\end{equation}
where $\cC(\Lambda)$ is the Choi matrix of the map $\Lambda$ and $C_U$ is normalized such that it corresponds to a quantum state, which is useful during implementation of the symmetric representation of the problem (\cref{app:impsym}). As before, we highlight that the local input ($\cI$) and output ($\cO$) spaces of these maps are interleaved: For the $\ell$-th $\cU$, $\cI = \SI_\ell$ and $\cO = \SO_\ell$, but for $\cM_{a_\ell}$ we have $\cI = \SO_\ell$ and $\cO = \SI_{\ell+1}$. Since the final output state is discarded (i.e., we only are only concerned with the \textit{probability} of outcome sequences), the final $\cO$ is traced out, yielding $\hat{M}_a$ in the equation above. This structure of the spaces is illustrated in \cref{fig:experimentapp}.

\begin{figure}[h!]\centering
	\includegraphics[width=0.5\linewidth]{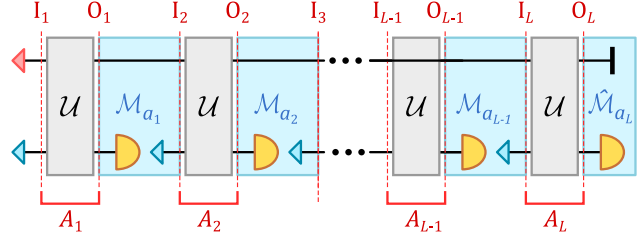}
	\caption{Diagram of the specific protocol discussed in this work, with each time step written in terms of distinct input and output Hilbert spaces, denoted by the dotted lines. Note that the input and output spaces of the $\cU$ and $\cM_{a}$ are interleaved: $\cU$ has inputs $\SI_\ell$ and outputs $\SI_\ell$, but for $\cM_{a_\ell}$, inputs are $\SO_\ell$ and outputs $\SI_{\ell+1}$.}
	\label{fig:experimentapp}
\end{figure}

We may now specify the probability of the sequence $\va$ as
\begin{align}\label{eq:probchoi}
	p(\va|\dE) =& \; \Tr{ X^\T \left( \bigotimes_{\ell=1}^{L} C_U^{\SI_\ell \SO_\ell} \right) } = \Tr{ X^\T (C_U)^{\otimes L} } \\
	X :=& \; (\dES)^L \cdot (\rhoEz \otimes \rhoSz)^{I_1} \otimes  M_{a_1}^{O_1 I_2} \otimes \dots \otimes M_{a_{L-1}}^{O_{L-1} I_L} \otimes \hat{M}_{a_L}^{O_L}.
\end{align}
Where the correcting normalization factor $(\dES)^{L}$ was incorporated into $X$ (to make up for the normalization of $C_U$), and $X^\T$ denotes the transpose with respect to the basis chosen for the isomorphism. In order to obtain an upper-bound for $p(\va|\dE)$, our goal is to optimize \cref{eq:probchoi} over all possible unitaries, in terms of $C_U$. As per the above definitions, it follows that $C_U$ must satisfy the constraints:
\begin{itemize}
	\item $C_U \ge 0$ and $\rank C_U = 1$, as it represents a unitary channel,
	\item $\TrP{\cO}{C_U} = \one_{\cI} / \dES$, as it is a trace preserving map
\end{itemize}

We may thus define our optimization problem as
\optimization{eq:opt1}{The initial formulation of the problem.}{
	\rhoEz,\; \rhoSz,\; \{ M_a \}_a,\; \va
}{
	\max_{C_U} \; \Tr{X^\T (C_U)^{\otimes L}}
}{
	C_U \ge 0, \quad \rank C_U = 1, \quad \TrP{\cO}{C_U} = \one_{\cI} / \dES.
}
This may be immediately relaxed to a convex form, without affecting the maximum of the objective function, as
\optimization{eq:opt2}{The convex relaxation of the original problem.}{
	\rhoEz,\; \rhoSz,\; \{ M_a \}_a,\; \va
}{
	\max_{\{\ket{\phi_i}\}_i} \Tr{X^\T \PhitL},\qquad {\PhitL} = \sum_i p_i \purestate{\phi_i}^{\otimes L},\; p_i \ge 0,\; \sum_i p_i = 1
}{
	\TrP{\cO}{\purestate{\phi_i}} = \one_{\cI} / \dES, \quad \ket{\phi_i} \in \cHES \otimes \cHES.
}

However, this optimization problem is non-linear, not only due to the tensor product in $\purestate{\phi_i}^{\otimes L}$, but also on the rank-1 nature of $\purestate{\phi_i}$. Let us now define our \emph{target set} $\cT$ as
\begin{equation}\label{eq:targetsetdef}
	{\PhitL} \in \cT := \set{ \sum_i p_i \purestate{\phi_i}^{\otimes L} \cond \ket{\phi_i} \in \cHES \otimes \cHES,\; p_i \ge 0,\; \sum_i p_i = 1,\; \TrP{\cO}{\purestate{\phi_i}} = \one_{\cI} / \dES }.
\end{equation}
Elements of this set will not generally be rank-1, but the optimal solutions of Problems 1 and 2 belong to this set. Our goal now is to approach $\cT$ by means of further relaxations, which is achieved by exploiting the symmetry of ${\PhitL}$.

% ~~~~~~~~~~~~~~~~~~~~~~~~~~~~~~~~~~~~~~~~~~~~~~~~~
\subsection{The unitary channel constraints}

The symmetric structure of ${\PhitL}$ requires all subspaces to be in the same local state. We can relax this by considering instead the separable set on $L$ parties,
\begin{equation}\label{eq:PhiL}
	\PhiL \in \SEP_L := \conv \set{ \bigotimes_{\ell=1}^{L} Q_\ell \cond Q_\ell \in \cB(\cHES \otimes \cHES),\; Q_\ell \ge 0,\; \Tr{Q_\ell} = 1 },
\end{equation}
such that all local spaces become independent. Here, $\conv$ denotes the convex hull of the set, and $\cB$ the set of bounded operators. We must now find ways to restore rank-1 and permutation invariance constraints for the optima over this set. For now, we highlight that $\cT \subset \SEP_L$, but simply switching from ${\PhitL} \in \cT$ to $\PhiL \in \SEP_L$ \emph{does} change our problem, so that we must impose further constraints to the set $\SEP_L$ to restore the original problem in $\cT$. The first step is restoring symmetry and optimality of rank-1 to $\PhiL$. After this, we may restore the partial trace constraint.

Let $\sigma \in \fS_n$ be a permutation from the set of all permutations on $n$ symbols $\fS_n$. We define
\begin{align}
	V_\sigma &:= \sum_{i_1, \cdots, i_n = 1}^{d}  \vert i_{\sigma^{-1}(1)}, \cdots, i_{\sigma^{-1}(n)} \rangle \hspace{-0.1em} \langle i_1, \cdots, i_n \vert, \label{eq:Vsigma} \\
	{P^{+}_n} &:= \frac{1}{n!} \sum_{\sigma \in \fS_n} V_\sigma \label{eq:Pplus}
\end{align}
as, respectively, the permutation operator and projector onto the symmetric subspace. The symmetric subspace on $n$ copies of $\cH$ is defined as $\Symn{\cH}:= \set{ \ket{\psi}\in \cH^{\otimes n} \cond {P^{+}_n} \ket{\psi}=\ket{\psi} }$, and the space of symmetric operators is given by the set $\Symn{\cB(\cH)} := \{ A \in \cB(\cH^{\otimes n})\ |\ P^+_n A = A\}$; see \cite{harrow2013church}. Note that, as $\rho$ is Hermitian, the symmetry condition can be equivalently written either as ${P^{+}_n} \rho = \rho$ or ${P^{+}_n} \rho {P^{+}_n} = \rho$. In fact, if ${P^{+}_n} \rho = \rho$, then $\rho^\dagger = ({P^{+}_n} \rho)^\dagger = \rho^\dagger P^{+} = \rho P^{+} = \rho \Rightarrow {P^{+}_n} \rho {P^{+}_n} = \rho$. Conversely, $\rho = {P^{+}_n} \rho {P^{+}_n} = ({P^{+}_n})^2 \rho {P^{+}_n} = {P^{+}_n} \rho$. More details about the symmetric subspace can be found in \cref{app:impsym}.

% ~~~~~~~~~~~~~~~~~~~~~~~~~~~~~~~~~~~~~~~~~~~~~~~~~
\subsubsection{Ensuring symmetry and rank-1 at the optimum}
\label{subsec::SymRank1}

By utilizing $\PhiL$ as our variable as in \cref{eq:PhiL}, we have lost both symmetry and rank-1 guarantees at the optimum. However, both can be restored by ensuring $\PhiL$ is in the symmetric subset of $\SEP_L$; see, e.g., \cite{Korbicz2005}.  For completeness, we provide in the following an elementary proof that
\begin{equation}\label{eq:symeqpi}
\Symn{\cB(\cH)} \intersection \SEP_n = \set{ \sum_i p_i \purestate{\phi_i}^{\otimes n} \cond \ket{\phi_i} \in \cHES \otimes \cHES,\quad \braket{\phi_i}{\phi_i} = 1, \quad p_i \ge 0,\quad \sum_i p_i = 1 },
\end{equation}
which implies $\cT \subset \operatorname{Sym}_L(\cB(\cH)) \intersection \SEP_L$. In fact, consider $\rho \in \Symn{\cB(\cH)} \intersection \SEP_n$. Since it is separable, we can write it as as a convex mixture of pure product states, i.e.,
\begin{equation}\label{eq:rhosymdec}
	\rho = \sum_i p_i \purestate{\Phi_i},
\end{equation}
with $\ket{\Phi_i}$ a $n$-party pure product state, i.e., $\ket{\Phi_i} = \bigotimes_{j=1}^{n} \ket{\phi_{i,j}}$, with the $\{\ket{\phi_{i,j}}\}_j$ not necessarily equal for a given $i$.
However, $\rho \in \Symn{\cB(\cH)}$ implies $\range{\rho} \subseteq \Symn{\cH}$, and if we prove that $\range{\purestate{\Phi_i}} \subseteq \Symn{\cH}$ for all $i$, then we find that $\{\ket{\phi_{i,j}}\}_j$ must be identical for each $i$, otherwise $\ket{\Phi_i}$ fails to be symmetric, namely, $\purestate{\Phi_i} = \purestate{\phi_i}^{\otimes n}$ and, thus, $\rho = \sum_i p_i \purestate{\phi_i}^{\otimes n}$.
It remains to prove that $\range{\purestate{\Phi_i}} \subseteq \Symn{\cH}$. To show this, we note that by \cref{eq:rhosymdec} and positivity of $\rho$ and $\purestate{\Phi_i}$, we have $\kernel{\rho} = \intersection_i \kernel{\purestate{\Phi_i}}$. By Hermiticity, it follows that $\range{\rho} = (\kernel{\rho})^{\perp} = \text{span}(\union_i \range{\purestate{\Phi_i}})$, which concludes the proof.

Importantly, \cref{eq:symeqpi}, restores the symmetry and rank-1 properties for optimum solutions of \cref{eq:opt2} when $\PhiL$ is restricted to $\Symn{\cB(\cH)} \intersection \SEP_n$. In light of this, in the following we use $\SymSep_n := \Symn{\cB(\cH)} \intersection \SEP_n$.

% ~~~~~~~~~~~~~~~~~~~~~~~~~~~~~~~~~~~~~~~~~~~~~~~~~
\subsubsection{The trace-preserving constraint}\label{sec:TPcond}

We now restore the trace-preserving constraint $\TrP{\cO}{\purestate{\phi_i}} = \one_{\cI} / \dES$ for states expressed as in \cref{eq:symeqpi}. This is established by proving that it is sufficient to satisfy this condition for a single party in the convex mixture.

\begin{proposition}\label{prop:sepchoicond}For any\; $\PhiL \in \SymSep_L$:
	\begin{equation}
		\TrP{\cO_1}{\PhiL} = \frac{\one_{\cI_1}}{\dES} \otimes \TrP{\SA{1}}{\PhiL} \quad\iff\quad \TrP{\cO_\ell}{\purestate{\phi_i}_{\SA{\ell}}} = \frac{\one_{\cI_\ell}}{\dES} \otimes \purestate{\phi_i}^{\otimes L-1} \quad \forall \ell.
	\end{equation}
\end{proposition}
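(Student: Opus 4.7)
The key is to use the canonical decomposition $\PhiL = \sum_i p_i \purestate{\phi_i}^{\otimes L}$ with $p_i > 0$, $\sum_i p_i = 1$, and $\braket{\phi_i}{\phi_i} = 1$, which any $\PhiL \in \SymSep_L$ admits by the result of Section~\ref{subsec::SymRank1}. Both directions of the biconditional are then obtained by substituting this decomposition, combined with a linear-independence argument on the symmetric subspace.

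For the implication $(\Leftarrow)$, assuming each $\ket{\phi_i}$ satisfies $\TrP{\cO}{\purestate{\phi_i}} = \one_{\cI}/\dES$, linearity of the partial trace immediately gives
\begin{equation*}
\TrP{\cO_1}{\PhiL} = \sum_i p_i \frac{\one_{\cI_1}}{\dES} \otimes \purestate{\phi_i}^{\otimes L-1} = \frac{\one_{\cI_1}}{\dES} \otimes \TrP{\SA{1}}{\PhiL},
\end{equation*}
yielding the global TP identity; this direction is essentially a one-line computation.

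The $(\Rightarrow)$ direction is the substantive one. The plan is to first merge any terms with proportional $\ket{\phi_i}$ so that the surviving pure states are pairwise non-proportional, then substitute the decomposition into the global TP identity and cancel to obtain
\begin{equation*}
\sum_i p_i\, \gamma_i \otimes \purestate{\phi_i}^{\otimes L-1} = 0, \qquad \gamma_i := \TrP{\cO_1}{\purestate{\phi_i}} - \frac{\one_{\cI_1}}{\dES}.
\end{equation*}
The central step is to argue that the operators $\{\purestate{\phi_i}^{\otimes L-1}\}_i$ are linearly independent in the symmetric subspace, which forces $p_i \gamma_i = 0$ and hence $\gamma_i = 0$ for every $i$, i.e., the local TP condition $\TrP{\cO_1}{\purestate{\phi_i}} = \one_{\cI_1}/\dES$. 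The statement for all sites $\ell$ then follows from the permutation symmetry of $\PhiL$ by relabelling.

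The main obstacle is rigorously justifying the linear independence of $\{\purestate{\phi_i}^{\otimes L-1}\}_i$ for pairwise non-proportional $\ket{\phi_i}$. This is equivalent to non-singularity of the Hilbert--Schmidt Gram matrix $G_{ij} = |\braket{\phi_i}{\phi_j}|^{2(L-1)}$, which has unit diagonal and strictly smaller off-diagonal entries; this is generically sufficient, but certain degenerate configurations can fail it. I would close this gap either by a continuity argument---perturbing the $\ket{\phi_i}$ within the closed set of unitary Choi states so that linear independence is recovered and then taking the limit---or alternatively by an extremal-point analysis of the convex set $\{\PhiL \in \SymSep_L : \TrP{\cO_1}{\PhiL} = (\one_{\cI_1}/\dES) \otimes \TrP{\SA{1}}{\PhiL}\}$, exhibiting a refined decomposition of $\PhiL$ whose terms each satisfy the local TP condition by construction.
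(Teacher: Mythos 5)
Your $(\Leftarrow)$ direction and your reduction of $(\Rightarrow)$ to the relation $\sum_i p_i\,\gamma_i\otimes\purestate{\phi_i}^{\otimes L-1}=0$ both match the paper, but the step you yourself flag as the obstacle is a genuine gap, and neither of your proposed patches closes it. Linear independence of $\{\purestate{\phi_i}^{\otimes L-1}\}_i$ for pairwise non-proportional $\ket{\phi_i}$ is false in general: already for $L=2$ one has $\purestate{0}+\purestate{1}-\purestate{+}-\purestate{-}=0$ on any two-dimensional subspace, and for any $L$ the decomposition may contain more terms than the dimension of $\operatorname{Sym}_{L-1}(\cB(\cHES\otimes\cHES))$, so the Gram matrix $G_{ij}=|\braket{\phi_i}{\phi_j}|^{2(L-1)}$ can be singular (unit diagonal with strictly smaller off-diagonal entries does not give diagonal dominance). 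The continuity patch does not work as described: perturbing the $\ket{\phi_i}$ destroys the hypothesis $\TrP{\cO_1}{\PhiL}=(\one_{\cI_1}/\dES)\otimes\TrP{\SA{1}}{\PhiL}$ for the perturbed state, so there is no established conclusion to pass to the limit; the extremal-point route is too vague to assess.

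The paper avoids any independence claim via a positivity trick (generalizing \cite[App.~A]{yu2020quantuminspired}). Writing $\cE(\cdot)=\TrP{\cO}{\cdot}-\Tr{\cdot}\one_{\cI}/\dES$, the hypothesis says that $\cE$ applied at site $1$ annihilates $\PhiL$; by symmetry the same holds at site $2$, hence $(\cE_{\SA{1}}\otimes\cE_{\SA{2}}\otimes\id_{\AIO{3}{L}})(\PhiL)=\sum_i p_i\, E_i\otimes E_i^\dagger\otimes\purestate{\phi_i}^{\otimes L-2}=0$ with $E_i=\cE(\purestate{\phi_i})=\gamma_i$. Contracting with $G\otimes G^\dagger\otimes\one$ for arbitrary $G$ yields $\sum_i p_i\,|\hspace{-0.15em}\Tr{G E_i}\hspace{-0.15em}|^2=0$, a sum of non-negative terms, which forces $E_i=0$ for every $i$ with $p_i>0$; symmetry then extends the conclusion to all sites $\ell$. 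This is the missing idea: use a second tensor factor to pair the defect with its adjoint, turning the cancellation problem into one about non-negative numbers. Replacing your independence step with this argument makes the proof go through.
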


\begin{proof}
	We provide a generalization of the arguments in \cite[App.~A]{yu2020quantuminspired}. From \cref{eq:symeqpi}, if $\PpL \PhiL = \PhiL$, we know it admits a decomposition of the form
	\begin{equation}\label{eq:phiabconvex}
		\PhiL = \sum_i p_i \purestate{\phi_i}^{\otimes L}, \qquad p_i \ge 0,\quad \sum_i p_i = 1.
	\end{equation}
	We introduce the one-party auxiliary map $\cE(\cdot) = \TrP{\cO}{\cdot} - \Tr{\cdot} \one_{\cI} / \dES$ implementing our partial trace constraint in $\cT$, such that $( \cE_{\SA{1}} \otimes \id_{\SA{2}} \otimes \id_{\AIO{3}{L}} ) ( \PhiL ) = 0$
	for any $\PhiL \in \SymSep_L \intersection \, \cT$, and, due to symmetry, the same is true if the map had been applied to the second party instead. Defining $\tilde{\cE}(\cdot) = [\cE(\cdot)]^\dagger$, we may then write $( \cE_{\SA{1}} \otimes \tilde{\cE}_{\SA{2}} \otimes \id_{\AIO{3}{L}} ) ( \PhiL ) = 0$. As $\cE$ acts on Hermitian operators and is Hermiticity-preserving, we have $\cE = \tilde{\cE}$ and the explicit distinction is only made for added clarity in the following proof. Applying the map to the convex mixture in \cref{eq:phiabconvex}, we obtain
	\begin{equation}
		\sum_i p_i \cE_{\SA{1}}(\purestate{\phi_i}_{\SA{1}}) \otimes \tilde{\cE}_{\SA{1}}(\purestate{\phi_i}_{\SA{2}}) \otimes \purestate{\phi_i}^{\otimes L-2} = \sum_i p_i E_i \otimes E_i^\dagger \otimes \purestate{\phi_i}^{\otimes L-2} = 0,
	\end{equation}
	where $E_i = \cE(\purestate{\phi_i})$. We need this to be true for all $E_i$ if we want to ensure $\PhiL$ obeys our partial trace constraint globally in the convex mixture. To prove this is the case, let $G \in \Complexes^{\dES^2 \times \dES^2}$, so that
	\begin{equation}
		\Tr{(G \otimes G^\dagger \otimes \one_{\AIO{3}{L}}) \left(\sum_i p_i E_i \otimes E_i^\dagger \otimes \purestate{\phi_i}^{\otimes L-2} \right)} = \sum_i p_i |\hspace{-0.15em}\Tr{G E_i}\hspace{-0.15em}|^2 \Tr{ \purestate{\phi_i}^{\otimes L-2} } = 0,
	\end{equation}
	which must hold for any $G$. Since $p_i \ge 0$, $|\hspace{-0.15em}\Tr{G E_i}\hspace{-0.15em}| \ge 0$, and $\Tr{\purestate{\phi_i}^{\otimes L-2}} = 1$, this can only be true for all $G$ if all $E_i = 0$. 	Furthermore, by the symmetry of each term in the mixture, it is then sufficient to ensure the constraint is satisfied in a single party. From \cref{eq:symeqpi}, the converse statement follows trivially.
\end{proof}

Thus, any state $\PhiL \in \SymSep_L$ satisfying $\TrP{\SO_1}{\PhiL} = (\one_{\SI_1} / \dES) \otimes \TrP{\SA{1}}{\PhiL}$ is part of the target set $\cT$ and vice versa.

% ~~~~~~~~~~~~~~~~~~~~~~~~~~~~~~~~~~~~~~~~~~~~~~~~~
\subsubsection{The final rank-constrained problem over SEP}

Given all of the previous results, we have now established a relation between the sets $\cT$ and $\SEP_L$,
\begin{equation}
	\cT = \set{ \PhiL \in \SEP_{L} \cond \PpL \PhiL = \PhiL,\quad \TrP{\SO_1}{\PhiL} = (\one_{\SI_1} / \dES) \otimes \TrP{\SA{1}}{\PhiL} },
\end{equation}
which allows us to remove the non-linear dependence on the tensor product $(C_U)^{\otimes L}$, by replacing the search space with $\SymSep_L$, while obeying the linear constraints on the partial trace of one party. We thus obtain:
\optimization{eq:opt3}{Rank-1 and symmetric optimum (exact)}{
	\rhoEz,\; \rhoSz,\; \{ M_a \}_a,\; \va
}{
	\max_{\PhiL \, \in \; \SEP_L} \Tr{X^\T \PhiL}
}{
	\PhiL \ge 0, \quad \Tr{\PhiL} = 1, \quad \PpL \PhiL = \PhiL,\\
	&	\TrP{\SO_1}{\PhiL} = \frac{\one_{\SI_1}}{\dES} \otimes \TrP{\SA{1}}{\PhiL}
}

However, $\PhiL \in \SEP_L$ is still not a linear constraint and highly non-trivial to be enforced, as a full characterization of the separable set is NP-HARD~\cite{gurvits2003}. Therefore, while a priori appearing more manageable, the above problem is still not in a form that can be tackled numerically. Fortunately, the symmetric constraint can also be exploited to obtain approximations of the symmetric separable states.

% ~~~~~~~~~~~~~~~~~~~~~~~~~~~~~~~~~~~~~~~~~~~~~~~~~
\subsection{The SEP constraint}\label{app:SDPdeFinetti}

% ~~~~~~~~~~~~~~~~~~~~~~~~~~~~~~~~~~~~~~~~~~~~~~~~~
\subsubsection{Approximation of SEP via the quantum de Finetti theorem}

Ensuring $\PhiL \in \SEP_L$ cannot be done exactly, so as it stands the problem is still numerically intractable. However, it is possible to define arbitrarily-precise outer approximations of the separable set through symmetric extensions and the quantum de Finetti theorem~\cite{Caves2002,christandl2007one}. Concretely, we can approximate, in principle to any desirable precision, a $\PhiL \in \SEP_L$ by a marginal over a larger symmetric (and not necessarily separable) state $\PhiN$, provided $N$ is sufficiently large. For any finite $N$, this leads to an outer approximation of $\cT$, which establishes a convergent hierarchy of outer approximations, such that for any $N \ge L$, we have
\begin{align}\label{eq:apphierarchybounds}
	\begin{split}
		&\omega^{\va}_{\dE} \le \cdots \le \tilde{\omega}^{\va,N+1}_{\dE} \le \tilde{\omega}^{\va,N}_{\dE} \le \cdots \le \tilde{\omega}^{\va,L}_{\dE}, \\
		&\text{with}\quad \lim_{N \to \infty} \tilde{\omega}^{\va,N}_{\dE} = \omega^{\va}_{\dE}.
	\end{split}
\end{align}
More precisely, using the quantum de Finetti theorem we can establish asymptotic error bounds on this approximation~\cite[Cor.1]{chiribella2011}. In terms of the trace norm, this asymptotic error bound is given by
\begin{equation}
	\left\Vert \PhiL - \TrP{\AIO{L+1}{N}}{\PhiN} \right\Vert_1 \le \frac{2L(L+(\dES)^2+1)}{N + (\dES)^2}.
\end{equation}
As the symmetry requirement applies to both $\PhiL$ and $\PhiN$, this application of the quantum de Finetti theorem for approximating $\PhiL \in \SEP_L$ is straightforward in the SDP in \cref{eq:finalSDPmain}, only requiring the use of a larger state $\PhiN$ as the optimization variable, and a suitable adjustment of the objective function using $(X^\T \otimes \one_{\AIO{L+1}{N}})$.

% ~~~~~~~~~~~~~~~~~~~~~~~~~~~~~~~~~~~~~~~~~~~~~~~~~
\subsubsection{Additional separability constraints}\label{app:ppt}

To improve the approximation, at an extra computational cost, we may also include additional separability constraints. In our case, we consider the constraints of positive partial transpose (PPT) on $\PhiN$:
\begin{equation}\label{eq:PPTconst}
	\PhiN^\Ta \ge 0, \quad \forall \; \text{non-equivalent bipartitions} \; \alpha.
\end{equation}
Here, ``non-equivalent bipartitions'' refers to the fact that it is unnecessary to include all bipartitions, as the state $\PhiN$ is symmetric, and thus permutation invariant. Therefore, the bipartitions needed are $\alpha_k$, where, e.g., we transpose only the first $k$ subspaces, for $1 \le k \le \floor{N/2}$, with symmetry taking care of the remaining constraints.

The existence of entangled states which have positive partial transpose~\cite{horodeckis1998pptes} means these constraints reduce the feasible convex set to that of PPT states, not the separable states. While entangled PPT states are not suitable solutions to the original problem, they are still adequate approximations and provide upper bounds for the actual optimal values. Inclusion of PPT constraints will perform at least as well as optimizing without them (i.e., the solution can only be improved), but convergence is significantly improved in certain cases~\cite{navascues2009} to $O(1/N^2)$ as opposed to $O(1/N)$ in the absence of these constraints. As shown in \cref{tbl:trivialSDPresults}, for the $\dE = 1$ case PPT constraints were sufficient to provide the exact analytical bounds.

% ~~~~~~~~~~~~~~~~~~~~~~~~~~~~~~~~~~~~~~~~~~~~~~~~~
\subsubsection{SDP for {\SEP} relaxation}

By all of the above results, we arrive at the following SDP relaxation of the original problem:
\optimization{eq:opt4}{Final outer approximation (SDP).}{
	\displaystyle \dE,\; \dS,\; \rhoEz,\; \rhoSz,\; \{ M_a \}_a,\; \va
}{
	\tilde{\omega}_{\dE}^{\va,N} := \max_{\PhiN} \Tr{(X^\T \otimes \one_{\AIO{L+1}{N}}) \PhiN}
}{
	\PhiN \ge 0, \quad \Tr{\PhiN} = 1, \quad \PpN \PhiN = \PhiN,\\
	&\TrP{\SO_1}{\PhiN} = \frac{\one_{\SI_1}}{\dES} \otimes \TrP{\SA{1}}{\PhiN} \\
	&\PhiN^{\Ta} \ge 0, \forall \alpha \in \mathfrak{A},
}
with $\mathfrak{A}$ denoting the set of non-equivalent bipartitions (\cref{app:ppt}). The solutions of this SDP provide upper-bounds for the maxima $\omega_{\dE}^{\va}$.

%%%%%%%%%%%%%%%%%%%%%%%%%%%%%%%%%%%%%%%%%%%%%%%%%%%%%%%%%%%%%%%%%%%%%%%%%%%%%%%%%%%%%%%%%%%%%%%%
\section{Implementation}\label{app:implementation}

While in principle numerically accessible, the SDP in \cref{eq:opt4} still contains too many variables to allow for the computation of upper bounds of $\tilde{\omega}_{\dE}^{\va,N}$, even for low-dimensional cases. This problem can be tackled by exploiting the fact that all appearing objects can be expressed within the symmetric subspace, as well as the (potential) sparsity of the problem.

In our case, a further simplification is possible by first noticing that both $X$ and $\PpN$ are real-valued. If we additionally consider a real-valued basis for the partial trace, we conclude that, for any feasible $\PhiN$, its entry-wise complex conjugate $\PhiN^*$ is also feasible, while providing the same value for the objective function. Therefore, we may perform the optimization using a real-valued $\PhiN$.
Next, we explain how to exploit the symmetry property.

\subsection{Symmetric representation}\label{app:impsym}

The symmetry constraint can be satisfied automatically by expressing $\PhiN$ directly in terms of a basis for the symmetric subspace. This allows for a great reduction in the number of variables and linear constraints in the SDP, which is required for its efficient numerical optimization. In the following, we describe how to construct this basis for the symmetric subspace, largely based on \cite{harrow2013church}, and how to adapt the remaining constraints to act directly in this symmetric representation, based on a generalization of results provided in \cite{stockton2003} for the qubit case.

We begin by defining a canonical basis for the symmetric subspace. With the convention $\Naturals=\{1,2,3,\ldots\}$ and $\Naturals_0 := \Naturals \union \{0\}$, for $d,n \in \Naturals$, let
\begin{equation}
	{\fT_d^n} := \set{ (t_1, t_2, \dots, t_d) \cond t_i \in \Naturals_0,\quad \sum_{i=1}^{d} t_i = n }
\end{equation}
be the set of weak integer compositions for $n$ into exactly $d$ parts, possibly of zero size, which we refer to as \emph{types}. These types form a canonical labeling for the basis of the symmetric subspace.
Now, defining $[d] := \{1, 2, \dots, d\}$, let $\vu \in [d]^n$, i.e., $\vu = (u_1, \dots, u_n)$ with $u_\ell \in [d]$, and $T(\vu) = \vt$ be the type of the vector $\vu$, where $t_i$ counts the number of instances where $u_\ell = i$ holds.  As a concrete example, if $d = 6$ and $n = 8$, we might have:
\begin{equation}
	\vu = (1, 4, 1, 2, 3, 2, 2, 6) \qquad \longrightarrow \qquad T(\vu) = (2,3,1,1,0,1).
\end{equation}|
In words, types count how many times a number $1, \dots, d$ occurs in $\vu$, and therefore are invariant under permutations, i.e., $T(\vu) = T(P_\sigma \vu)$ for any permutation operator $P_\sigma$ acting on entries of $\vu$. Given a Hilbert space $\cH$ with $d = \dim \cH$, we may now define a basis for $\Symn{\cH}$ in terms of the types $\vt \in {\fT_d^n}$ by constructing the non-normalized and normalized orthogonal basis vectors for the symmetric subspace~\cite{harrow2013church}, respectively, as
\begin{align}\label{eq:symbasis}\begin{split}
	\ket{\text{Sym}(\vt)} := \sum_{\vu ; T(\vu) = \vt} \ket{u_1, u_2, \cdots, u_n}, \qquad
	\ket{\text{sym}(\vt)} := (\vt)^{-1/2} \ket{\text{Sym}(\vt)},
\end{split}\end{align}
where $(\vt) = \frac{(\sum_i t_i)!}{t_1! \cdot t_2! \cdots t_d!}$ denotes the multinomial coefficient for the normalization, and, for convenience in notation, we adopt the canonical basis $\{\ket{u}\}_{u=1}^d$ for each individual $u_i$. \Cref{eq:symbasis} clearly defines a symmetric state, since each $\vu$ occurs only once in the sum. Thus, for any permutation operator $V_\sigma$, we have $V_\sigma \ket{\text{sym}(\vt)} = \ket{\text{sym}(\vt)}$, and similarly for the non-normalized $\ket{\text{Sym}(\vt)}$.

From \cref{eq:symeqpi}, we know any symmetric separable state can be written as $\sum_i p_i \purestate{\phi_i}^{\otimes n}$. Therefore, by expressing $\ket{\phi_i}^{\otimes n}$ in terms of the symmetric basis in \cref{eq:symbasis}, we may write any symmetric operator by indexing the degrees of freedom of $\PhiN$ by the types $\vt, \vt'$. In our current problem, with a local dimension $d = (\dE \dS)^2 = \dES^2$, the symmetric space has total dimension~\cite{harrow2013church}
\begin{equation}
	D_\cS = \dim \operatorname{Sym}_N(\cH) = \multiset{\dES^2}{N} = \binom{N + \dES^2 - 1}{N},
\end{equation}
with $\multiset{d}{n}$ the multiset notation, i.e., the number of ways of picking $n$ elements out of $d$, with repetitions allowed, which corresponds to the cardinality of ${\fT_d^n}$. This is a significant reduction from the original $\dim \AIO{1}{N} = (\dES)^{2 N}$.

For simplicity in notation, we consider the dependence of $D_\cS$ on $\dES$ and $N$ implicit in the following. To concretely exploit this reduction of the number of variables, i.e., in order to write $\PhiN$ in terms of a smaller matrix, we re-express the problem as follows. Let $\cS_N := \Complexes^{D_\cS \otimes D_\cS}$, and $\phi \in \cS_N$. We specify elements $\phi_{\vt,\vt'}$, with $\vt,\vt' \in \fT^N_{\dES^2}$, by a canonical orthonormal basis $\{\ket{\vt}_{\cS_N} \}_{\vt}$. We define an isometry between the two representations of the symmetric subspace, i.e., the normalized symmetric vectors in the full space $\AIO{1}{N}$ and the canonical basis for $\cS_N$, as follows:
\begin{equation}
	V_N = \sum_{\vt} \ket{\vt}_{\cS_N} \bra{\text{sym}(\vt)}_{\AIO{1}{N}}.
\end{equation}
Note that this isometry acts as a projector in $\AIO{1}{N}$, i.e., $V_N {V_N}^\dagger = \one_{\cS_N}$ but ${V_N}^\dagger V_N  \ne \one_{\AIO{1}{N}}$. Therefore, if $\PpN \PhiN = \PhiN$, we may write it concisely as:
\begin{equation}\label{eq:appsymversion}
	\begin{split}
		\PhiN = {V_N}^\dagger \phi {V_N} = \sum_{\vt,\vt'} \phi_{\vt,\vt'} \ketbra{\text{sym}(\vt)}{\text{sym}(\vt')}_{\AIO{1}{N}}, \\
		\phi = V_N \PhiN {V_N}^\dagger, \quad \text{and} \quad x = {V_N} (X^\T \otimes \one_{\AIO{L+1}{N}}) {V_N}^\dagger.
	\end{split}
\end{equation}
so that the objective function, which acts entirely within the symmetric subspace, can be written as
\begin{equation}
	\Tr{(X^\T \otimes \one_{\AIO{L+1}{N}}) \PhiN} = 
	\Tr{ (X^\T \otimes \one_{\AIO{L+1}{N}}) {V_N}^\dagger \phi {V_N}} = \Tr{ {V_N}(X^\T \otimes \one_{\AIO{L+1}{N}}) {V_N}^\dagger \phi } =\Tr{x \phi}.
\end{equation}

Positivity of $\PhiN$ can be guaranteed by requiring $\phi \ge 0$ directly, as $\phi$ and $\PhiN$ have the same non-zero eigenvalues. With the chosen normalization of $C_U$, we may also write $\Tr{\phi} = 1$. In practice, it is sufficient to numerically compute $x$ directly by means of \cref{eq:appsymversion}, as computing the projection of $X^\T \otimes \one_{\AIO{L+1}{N}}$ onto the symmetric subspace analytically can be cumbersome, and this computation only needs to be performed once before the numerical optimization.

\medskip

To rewrite the partial trace constraints of the SDP directly in this symmetric basis, we must find a way to express partial traces in terms of the symmetric representation. In the following, we generalize some results known for qubits~\cite{stockton2003} to arbitrary local dimension. But first, recall that the constraint we wish to rewrite is given by
\begin{equation}\label{eq:ptrconsts}
	\TrP{\SO_1}{\PhiN} = \frac{\one_{\SI_1}}{\dES} \otimes \TrP{\SA{1}}{\PhiN}.
\end{equation}

Treating types as ordinary vectors, we may define addition between types with the same $d$, such that if $\vr \in {\fT_d^n}$ and $\vs \in {\fT_d^m}$, then $\vr + \vs = \vt \in {\fT_d^{n+m}}$. This corresponds to the fact that ${\cS_{n+m}} \subset {\cS_n} \otimes {\cS_m}$~\cite{stockton2003}. Generalizing upon this idea, we may split an $n$-partite symmetric state into $m$ parts of sizes $k_i$, with $\sum_{i=1}^m k_i = n$, through the decomposition
\begin{equation}\label{eq:symsplit}
	\ket{\text{Sym}(\vt)}_{\AIO{1}{N}} = \sum_{\vr_1 + \cdots + \vr_m = \vt} \ket{\text{Sym}(\vr_1)}_{\text{B}_{1}} \cdots \ket{\text{Sym}(\vr_m)}_{\text{B}_{m}},
\end{equation}
where the sum is over all tuples $(\vr_1, \dots, \vr_m) \in ({\fT_d^{k_1}} \times \cdots \times {\fT_d^{k_m}})$ satisfying $\vr_1 + \cdots + \vr_m = \vt$. Here, $\text{B}_{\ell} := \AIO{b_\ell+1}{b_\ell+k_\ell}$, with $b_\ell = \sum_{i=1}^{\ell-1} k_i$, corresponding to the subspace of the $\ell$-th part of the decomposition. Importantly, this decomposition is performed in the full space with non-normalized vectors. For completion, the normalized version of the decomposition in \cref{eq:symsplit} is given by
\begin{equation}\label{eq:symsplitnorm}
	\ket{\text{sym}(\vt)}_{\AIO{1}{N}} = \sum_{\vr_1 + \cdots + \vr_m = \vt} \left(\frac{(\vt)}{(\vr_1)\cdots(\vr_m)}\right)^{-1/2} \ket{\text{sym}(\vr_1)}_{\text{B}_{1}} \cdots \ket{\text{sym}(\vr_m)}_{\text{B}_{m}}.
\end{equation}
However, working under such normalization is cumbersome and inefficient due to the several coefficients involved. Instead, we have chosen to normalize directly in terms of the original types $\vt, \vt'$, which makes normalization straightforward. To adapt the partial trace constraint, we first write the state $\PhiN$ as in \cref{eq:symversion}, and using \cref{eq:symsplit} split the basis into two parts of sizes $(1,N-1)$, obtaining
\begin{align}\begin{split}\label{eq:phinsplit}
		\PhiN = \sum_{\vt,\vt'} \; \sum_{\vr + \vs = \vt} \; \sum_{\vr' + \vs' = \vt'} \hat{\phi}_{\vt,\vt'} \ketbra{\text{Sym}(\vr)}{\text{Sym}(\vr')}_{\SA{1}} \otimes \ketbra{\text{Sym}(\vs)}{\text{Sym}(\vs')}_{\AIO{2}{N}},
\end{split}\end{align}
where the proper normalization is now taken care of by defining $\hat{\phi}_{\vt,\vt'} := \phi_{\vt,\vt'} \left((\vt)(\vt')\right)^{-1/2}$. Since $\vr$ and $\vr'$ are types for a single party, we have that $\ket{\text{Sym}(\vr)}_{\SA{1}}$ are simply vectors in the canonical basis $\{\ket{u}\}_{u=1}^d$. Thus, we will adopt the notation $\ket{u(\vr)}_{\SA{1}} = \ket{\text{Sym}(\vr)}_{\SA{1}}$ in what follows. The tensor product form allows for the simplified application of the partial traces in \cref{eq:ptrconsts}. By linearity of the partial trace, we can treat each term of \cref{eq:phinsplit} separately, so that the terms on the left ($Y_L$) and right ($Y_R$) hand sides of \cref{eq:ptrconsts} can be written as
\begin{align}\label{eq:trpsymsL}
	\begin{split}
		Y_L :=& \TrP{\SO_1}{\ketbra{u(\vr)}{u(\vr')}_{\SA{1}} \otimes \ketbra{\text{Sym}(\vs)}{\text{Sym}(\vs')}_{\AIO{2}{N}}} \\ =& \TrP{\SO_1}{\ketbra{u(\vr)}{u(\vr')}_{\SA{1}}} \otimes \ketbra{\text{Sym}(\vs)}{\text{Sym}(\vs')}_{\AIO{2}{N}}
	\end{split}
\end{align}
\begin{align}\label{eq:trpsymsR}
	\begin{split}
		Y_R :=& \frac{\one_{\SI_1}}{\dES} \otimes \TrP{\SA{1}}{\ketbra{u(\vr)}{u(\vr')}_{\SA{1}} \otimes \ketbra{\text{Sym}(\vs)}{\text{Sym}(\vs')}_{\AIO{2}{N}}} \\ =& \Tr{\ketbra{u(\vr)}{u(\vr')}_{\SA{1}}} \cdot \frac{\one_{\SI_1}}{\dES} \otimes \ketbra{\text{Sym}(\vs)}{\text{Sym}(\vs')}_{\AIO{2}{N}}.
	\end{split}
\end{align}
A few simplifications are now evident. First, we observe that $\Tr{\ketbra{u(\vr)}{u(\vr')}_{\SA{1}}} = \delta_{\vr,\vr'}$, where $\delta_{\vr,\vr'}$ is the Kronecker delta. Secondly, we may split the states $\ket{u(\vr)}_{\SA{1}}$ into the local input and output spaces
\begin{equation}
	\ket{u(\vr)}_{\SA{1}} = \ket{i(\vr)}_{\SI_1} \ket{o(\vr)}_{\SO_1},
\end{equation}
so that we obtain
\begin{align}\begin{split}
	\TrP{\SO_1}{\ketbra{u(\vr)}{u(\vr')}_{\SA{1}}} &= \TrP{\SO_1}{ \ketbra{i(\vr)}{i(\vr')}_{\SI_1} \otimes \ketbra{o(\vr)}{o(\vr')}_{\SO_1} } \\
	&= \ketbra{i(\vr)}{i(\vr')}_{\SI_1} \cdot \Tr{ \ketbra{o(\vr)}{o(\vr')}_{\SO_1} } \\
	&= \ketbra{i(\vr)}{i(\vr')}_{\SI_1} \cdot \delta_{o(\vr),o(\vr')}.
\end{split}\end{align}

Using the above results, and correcting for the missing normalizations, we can apply the isometry $\one_{\SI_1} \otimes V_{N-1}$, so that \cref{eq:trpsymsL,eq:trpsymsR} become:
\begin{align}\label{eq:trpsymsdeltas}\begin{split}
	Y_L &\to ((\vs)(\vs'))^{1/2} \cdot \delta_{o(\vr),o(\vr')} \cdot \ketbra{i(\vr)}{i(\vr')}_{\SI_1} \otimes \ketbra{\vs}{\vs'}_{\cS_{N-1}} \\
	Y_R &\to ((\vs)(\vs'))^{1/2} \cdot \frac{\delta_{\vr,\vr'}}{\dES} \cdot \sum_{i=1}^{\dES} \ketbra{i}{i}_{\SI_1} \otimes \ketbra{\vs}{\vs'}_{\cS_{N-1}}.
\end{split}\end{align}
Here, we have written $\one_{\SI_1} = \sum_{i=1}^{\dES} \ketbra{i}{i}_{\SI_1}$ as to make the $\SI_1 \otimes \cS_{N-1}$ decomposition explicit in both expressions. By inserting \cref{eq:trpsymsdeltas} back into the sum of \cref{eq:phinsplit}, we can appreciate the fact that \cref{eq:trpsymsdeltas} neatly separates each term of the sum as square matrices of size $\dES \times \multiset{\dES^2}{N-1}$, written in terms of $\ketbra{i}{i'}_{\SI_1} \otimes \ketbra{\vs}{\vs'}_{\cS_{N-1}}$. The constraint of \cref{eq:ptrconsts} then tells us we must sum over all $\vt,\vt'$ on both sides, where we can then apply the equality constraint element-wise by simply matching the resulting $(i,\vs,i',\vs')$ entries. As the extra normalization $((\vs)(\vs'))^{1/2}$ of \cref{eq:trpsymsdeltas} is always equal between these elements, it cancels out in their element-wise equality constraint. Thus, it is sufficient to use the normalization of $\hat{\phi}_{\vt,\vt'}$.

During implementation in software, the summations in \cref{eq:phinsplit} can be efficiently performed by taking into account the fact that valid decompositions of the form $\vt = \vr + \vs$ are very restricted in number, and can be efficiently enumerated, grouped, filtered and counted. By assigning a tuple $(i,o,\vs)$ to each $\vt$, the tracing over inputs and output spaces, and applications of the Kronecker deltas, can thus be efficiently computed. Therefore, the linear constraints between variables $\phi_{\vt,\vt'}$ can be constructed by bucketing terms $\phi_{\vt,\vt'}$ by matching $\ketbra{i}{i'}_{\SI_1} \otimes \ketbra{\vs}{\vs'}_{\cS_{N-1}}$, and normalization is simplified by attaching it to $\phi_{\vt,\vt'}$ at the very end, i.e., using $\hat{\phi}_{\vt,\vt'}$.

A similar strategy as above may be employed to define partial transpositions through the symmetric representation, as was done in \cite{stockton2003} for the qubit case, in order to implement the PPT condition. However, for $\dE = 2$ in our case, this approach would render the matrices and the number of constraints too large for a viable computation, once again, and therefore we did not pursue a generalization of this idea. The optimizations we have performed for $\dE = 1$ involving PPT constraints, as shown in \cref{tbl:trivialSDPresults}, used $\PhiN$ directly following \cref{eq:symversion}, which was still tractable in this case. 

While the above steps lead to a significantly smaller representation of the problem, it is not generally a sufficient reduction in variables and constraints for the SDP to be numerically tractable. For concreteness, in the smallest non-trivial case of $\dE = \dS = 2$, this gives for $N = 2, \dots, 5$ a symmetric space of size $D_\cS = 136,\; 816,\; 3\,876,\; 15\,504, \; \dots,$ such that the SDP would be written in terms of $(D_\cS)^2 = 18\,496,\; 665\,856,\; 15\,023\,376,\; 240\,374\,016, \dots$ variables. These examples illustrate how quickly our problem can become numerically intractable, even with symmetry taken into account.

\subsection{Sparse implementation of the SDP}\label{app:impsparse}

In the following, we describe in detail the algorithm we have developed to obtain a sparse version of the SDP. This is an essential step in rendering the problem numerically tractable. The algorithm works by heuristically exploiting any existing sparsity of the original problem, in particular its objective function, in order to automatically construct a sparse outer approximation. In our particular case, the sparse SDP resulting from our algorithm was not only far simpler, it was in fact an exact sparse representation of the original problem. Below, we also discuss the general conditions required for this to be the case for our algorithm. We highlight that it is potentially applicable to many other SDPs, providing at least an outer approximation, although there are no guarantees that a sparse simplification of the original problem will be obtained.

To see how a problem's sparsity can be exploited, consider a standard form SDP over $\Complexes^{n \times n}$:
\begin{align}\begin{split}
	\textbf{Given:} &\quad F, \; (C^{(k)})_k, \; (b_k)_k \\
	\textbf{Find:} &\quad \max_X 
	\; \langle F, X \rangle \\
	\textbf{Subject to:} &\quad \langle C^{(k)}, X \rangle = b_k, \quad k = 1,...,m \\
	&\quad X \ge 0.
\end{split}\end{align}
where $\langle A, B \rangle = \Tr{A^\dagger B}$. We assume $F$ is Hermitian so that $F_{ij} \ne 0$ implies $F_{ji} \ne 0$. Our goal is to exploit any sparsity of $F$, which defines the objective function, to simplify this problem by obtaining a suitable sparse version. This is achieved by finding a subset of variables $X_{ij}$ which is self-sufficient to solve the problem, and this is done by keeping track of the indices $(i,j)$ which can influence the objective function, directly or indirectly.

We begin by defining the initial \emph{base sparsity} of the problem as
\begin{equation}
	\theta_\text{base}^0 := \set{ (i,j) \cond F_{ij} \ne 0 } \union \{ (i,i) \}_{i=1}^{n},
\end{equation}
i.e., the pairs of indices with non-zero entries in $F$, together with the indices in the diagonal. We include the diagonal indices by default in order to ensure $X \ge 0$ can be satisfied properly in the sparse version of the problem, as will be explained later. As $\langle F, X \rangle = \sum_{i,j} F^{*}_{ij} X_{ij}$, the set $\theta_\text{base}^0$ corresponds to the entries of $X$ which appear explicitly in the objective function, i.e., the optimization variables $X_{ij}$ affecting its value, together with the diagonal entries. However, these are not the only variables relevant to the problem, as they also appear in linear constraints involving other variables not present in the objective, which may affect the objective function indirectly. Our next goal is to collect all of these indirect constraints to the objective function's variables. Defining the subset of constraints $C^{(k)}$ which relate to $\theta_\text{base}^0$ as
\begin{equation}
	\cK^0 := \bigcup_{(i,j) \in \theta_\text{base}^0} \set{k \cond C^{(k)}_{ij} \ne 0 },
\end{equation}
we can define the initial \emph{extended sparsity}
\begin{equation}
	\theta_\text{ext}^0 := \bigcup_{k \in \cK^0} \set{ (i,j) \cond C^{(k)}_{ij} \ne 0 }.
\end{equation}
In words, this set includes the indices for any variables present in a linear constraint also involving the variables directly affecting the objective function. The extended sparsity $\theta_\text{ext}^0$ specifies an initial guess for the minimum subset of variables $\{ X_{ij} \;\vert\; (i,j) \in \theta_\text{ext}^0 \}$ which should be considered in the SDP. The positivity constraint $X \ge 0$ could then be specified, in principle, in terms of a block-diagonal matrix which contains the variables specified by $\theta_\text{ext}^0$ within its blocks. Positivity of this block-diagonal matrix, and in turn of $X$, could then be achieved by enforcing positivity for each block separately.

As positivity constraints require significant computational resources, ideally these blocks should be made as small as possible. In order to obtain the minimal block-diagonal form, we first use $\theta_\text{ext}^0$ to construct an adjacency matrix $A^0$
\begin{equation}
	[A^0]_{ij} := \begin{cases}
		1, \quad \text{if}\; (i,j) \in \theta_\text{ext}^0 \\
		0, \quad \text{else}
	\end{cases},
\end{equation}
with which we construct an undirected graph $G(A^0)$ with $n$ vertices, where each edge corresponds to a pair of indices $(i,j)$. We can use this graph to obtain a permutation which leads to the minimal block-diagonal form, as each block corresponds to a disjoint subset of indices which are interrelated. This in turn, corresponds to the notion of \emph{connected components} of a graph, i.e., its disjoint subgraphs, as shown in \cref{fig:adjacencygraph}. Let $\{g_r\}_{r=1}^R$ be the $R$ connected components of $G(A^0)$, so that $G(A^0) = \bigcup_{r=1}^{R} g_r$, and let $\nu_r$ be the set of indices of the vertices in $g_r$. We may use the $\nu_r$ to construct a permutation $\pi = \left( \nu_1 \; \nu_2 \; \cdots \; \nu_R \right)$, together with its associated permutation matrix $P$, such that $P A^0 P^\T$ is block-diagonal, with one block per connected component. E.g., from \cref{fig:adjacencygraph}, we could use the permutation $\pi = [(5 2 7 3) (4) (1 6)]$, where the ordering of the $\nu_r$, or within each $\nu_r$, are irrelevant. This permutation leads to the block-diagonal adjacency matrix shown in figure \cref{fig:adjacencygraphblocks}, on the left.

\begin{figure}[h]\centering
	\includegraphics[width=0.4\linewidth]{"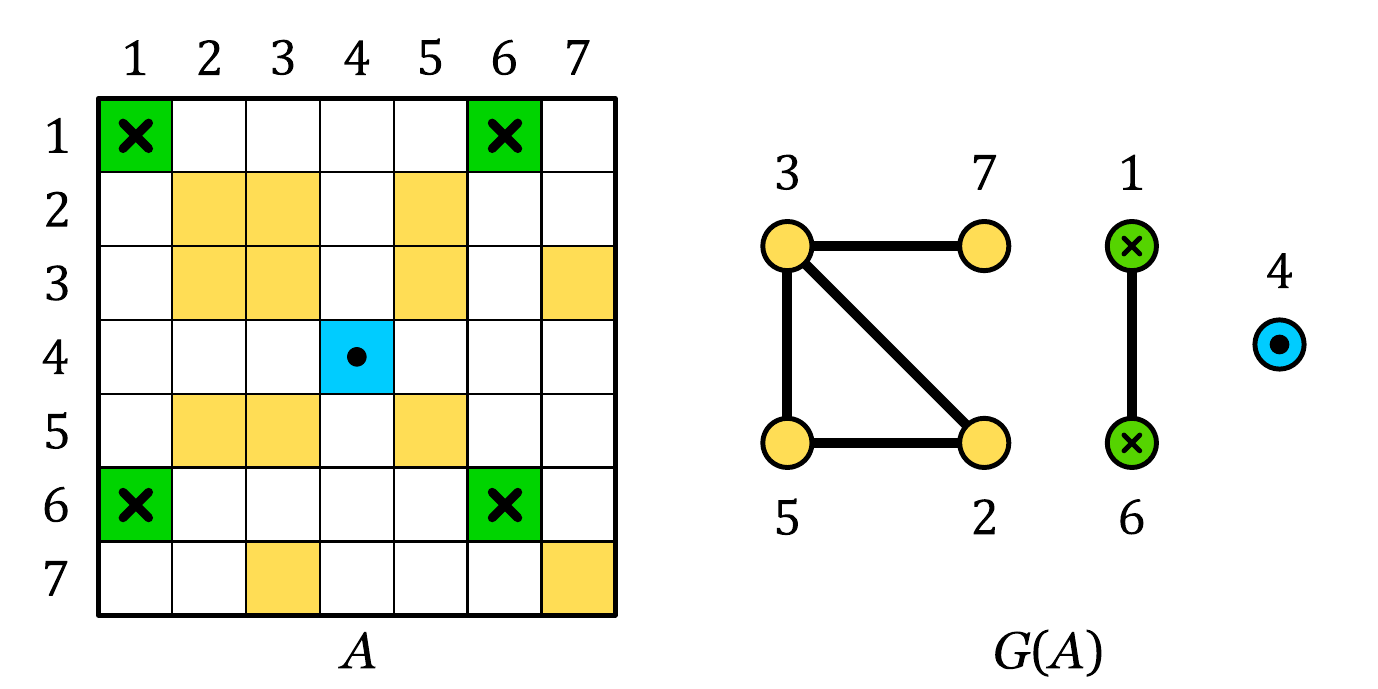"}
	\caption{An adjacency matrix $A$ and its corresponding graph $G(A)$, with three connected components. Non-zero entries are depicted with the colors (and markers) of the component they belong to. Self-edges, corresponding to $(i,i)$ entries, are omitted.}
	\label{fig:adjacencygraph}
\end{figure}

Therefore, if we take $\theta_\text{ext}^0$ to be the sparsity pattern for $X$, we can construct the sparse matrix
\begin{equation}
	\widetilde{X} := P^\T \left( \bigoplus_{r = 1}^{R} \widetilde{B}_r \right) P, \quad \widetilde{B}_r \ge 0 \; \forall \, r,
\end{equation}
with sparse blocks of variables given by $\widetilde{B}_r$. However, imposing positivity for each block is not always possible in this case, as the blocks themselves may have a sparse structure incompatible with positivity constraints. Concretely, there might be \emph{no} $\widetilde{X} \ge 0$ matrix with sparsity structure given by $\widetilde{X}_{ij} = 0 \; \forall\; (i,j) \notin \theta_\text{ext}^0$.

\begin{figure}[h]\centering
	\includegraphics[width=0.4\linewidth]{"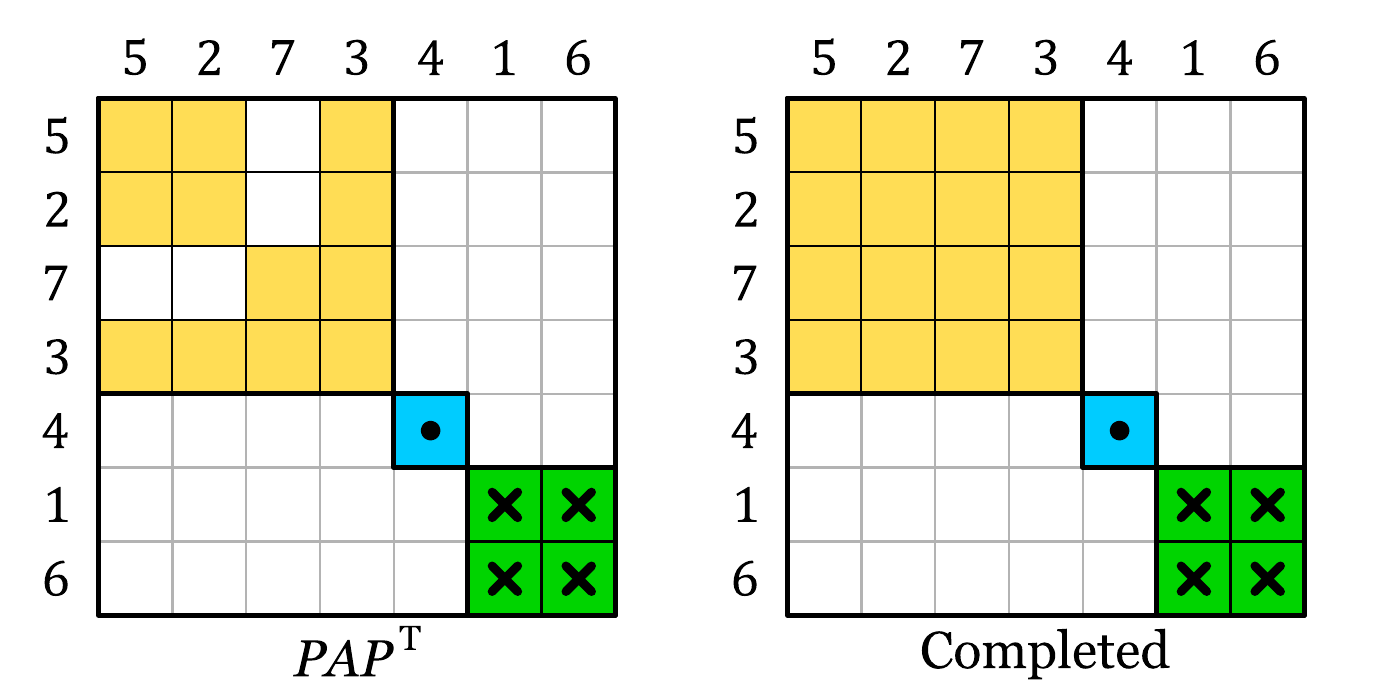"}
	\caption{(Left) Under the permutation $\pi = [(5 2 7 3) (4) (1 6)]$, the matrix $A$ of \cref{fig:adjacencygraph} becomes block-diagonal, with blocks possibly containing ``gaps'' indicating that not all pairs of vertices share an edge for that component. (Right) The blocks can be completed by adding the missing entries (i.e., edges).}
	\label{fig:adjacencygraphblocks}
\end{figure}

To address this in general, we instead assume all the blocks are dense. We can then define the initial \emph{completed sparsity}
\begin{equation}
	\theta_\text{comp}^0 := \bigcup^{R}_{r = 1} \set{ (i,j) \in \nu_r \times \nu_r },
\end{equation}
by including all missing indices $(i,j)$ within each block. In graph theoretical terms, we turn each connected component into a complete graph, i.e., every pair of vertices shares an edge, and use the non-zero entries of the resulting adjacency matrix to define $\theta_\text{comp}^0$; see \cref{fig:adjacencygraphblocks}, right panel. Denoting by $B_r$ the dense block matrices, we can define the first sparse approximation of $X$ as
\begin{equation}
	\widetilde{X}^0 := P^\T \left( \bigoplus_{r = 1}^{R} B_r \right) P, \quad B_r \ge 0 \; \forall \, r,
\end{equation}
such that $\widetilde{X}^0 \ge 0$ can be safely imposed through block-positivity, with no risk of rendering the SDP unfeasible.

The next step is to analyze how this sparse structure affects the other constraints in the SDP. These procedures expanded our initial base sparsity $\theta_\text{base}^0$ by including a larger set of indices (and therefore, variables) of the original dense $X$, resulting in $\theta_\text{comp}^0$. However, these additional variables may themselves appear in other linear constraints, involving even further variables we have not yet taken into account. These inter-dependencies between variables lead to a cascade of new constraints and variables that must be considered, i.e., the additional indices included in $\theta_\text{comp}^0$ specify variables which appear in constraints which were previously ignored, as they did not involve variables specified by $\theta_\text{base}^0$. Therefore, we repeat the above procedures until a stable and completely self-contained set of indices naturally emerges, i.e., we define a new base sparsity $\theta^1_\text{base} := \theta^0_\text{comp}$, and repeat the process $n+1$ times until $\theta^{n+1}_\text{base} = \theta^{n}_\text{base}$, at which point we call this stable set the \emph{effective sparsity} $\theta^{*}$ for the problem. In our particular applications, the algorithm converges within 2 to 4 iterations.

Note that, by construction, any variable $X_{ij}$ for $(i,j) \in \theta^{*}$ appears in the objective function directly, or if not, affects the variables in the objective either directly (through linear constraints) or indirectly (through positivity constraints and, in turn, through further constraints, and so on, recursively).
The goal of the above algorithm is to automatically discover a set of variables and constraints which is self-sufficient, given any objective function and linear constraints for an SDP.

While the algorithm generally provides an outer approximation, it can also lead to an exact sparse version of the original problem if the unused linear constraints are homogeneous, which is the case for the problem we consider. To understand why, we define the set of unused constraints as
\begin{equation}
	\cK^*_{\perp} := \set{k \cond C^{(k)}_{ij} = 0, \; \forall \; (i,j) \in {\theta^{*}} }.
\end{equation}
If the unused constraints are all homogeneous, i.e., $b_k = 0$ for all $k \in \cK^*_{\perp}$, then the original dense problem can be solved entirely within $\theta^{*}$, without affecting the objective function, as any sparse solution can be made into a feasible solution to the dense problem. This can be understood as follows. Let $X^{*}$ be a solution to the sparse problem, and $X^{*}_\perp$ a matrix in the complementary space $\theta^{*}_{\perp}$, containing all indices not appearing in $\theta^{*}$. Any constraints discarded by the algorithm (i.e., the indices in $\cK^*_{\perp}$) will act entirely on $\theta^{*}_{\perp}$. Then, as the objective function contains only variables within $\theta^{*}$, we may write a dense solution $X = X^{*} + X^{*}_{\perp}$, such that
\begin{equation}
	\Tr{F^\dagger X} = \Tr{F^\dagger X^{*}} + \Tr{F^\dagger X^{*}_{\perp}} = \Tr{F^\dagger X^{*}},
\end{equation}
with $X^{*}, X^{*}_{\perp} \ge 0$, and such that all constraints $\langle C^{(k)}, X \rangle = b_k$ uncouple onto orthogonal spaces. Now, if the linear constraints involving $X^{*}_{\perp}$ are all homogeneous ($b_k = 0$ for all $k \in \cK^*_{\perp}$), we may set $X^{*}_{\perp} = 0$ directly, while simultaneously satisfying all constraints in the dense problem, and without affecting the objective function. Therefore, any sparse solution $X^{*}$ is also a valid solution to the dense problem, with the same value for the objective: the sparse problem is exact. Note that the diagonal indices needed to be included in $\theta_\text{base}^0$ in order to ensure $X \ge 0$ is satisfied because of $X^{*} \ge 0$, as otherwise positivity of $X$ could always be satisfied for any $X^{*}$ by choosing arbitrarily large diagonal entries in $X^{*}_{\perp}$.

\medskip

Despite the success of this algorithm in giving a sparse solution to our particular problem, in general, there is no guarantee that $\theta^{*}$ will be sparse. It may be the case that the algorithm eventually includes all entries of $X$, in which case an alternative approach must be used, e.g., stopping the algorithm before a stable sparsity pattern is reached (generally resulting in an outer approximation), using a more refined completion procedure (e.g., block-wise chordal completions~\cite{zheng2021,fukuda2001}), or exploiting additional structures of the original problem. Moreover, the sparsity of the original problem, and thus the efficacy of the above algorithm, relies heavily on the choice of basis in which the problem is represented. Fortunately, for our choice of parameters, the objective function and the resulting $\theta^{*}$ were sufficiently sparse for the SDP to be numerically tractable. Furthermore, as the partial trace constraints $\TrP{\SO_1}{{V_N}^\dagger \phi V_N} = \frac{\one_{\SI_1}}{\dES} \otimes \TrP{\SA{1}}{{V_N}^\dagger \phi V_N}$ are all homogeneous, this approach lead to an exact sparse representation of \cref{eq:finalSDPmain}.

Finally, we note that other techniques exist to formulate sparse SDPs~\cite{zheng2021,fukuda2001}, typically relying on the problem having an inherent ``aggregate sparsity'' (the joint sparse structure of $F$ and all $C^{(k)}$ taken together), or being analytically amenable to a sparser representation by an appropriate change of variables. In our problem, no such structure is a priori apparent: only the objective is inherently sparse, with the linear constraints jointly involving all variables in a non-trivial manner. In such cases, our algorithm is then a suitable technique to obtain an ``effective'' sparsity focused on the objective function's inherent sparse structure, with the advantage of being more agnostic to the problem's overall structure.

%%%%%%%%%%%%%%%%%%%%%%%%%%%%%%%%%%%%%%%%%%%%%%%%%%%%%%%%%%%%%%%%%%%%%%%%%%%%%%%%%%%%%%%%%%%%%%%%

\section{Maximum probability for closed systems}\label{app:trivialbound}

Here, we provide a proof for the analytical bounds for the $\dE=1$ case mentioned in \cref{sec:opensys}. To this end, recall that, in the sequential measurement protocol, the probability of a sequence of outcomes can be written as
\begin{equation}
\label{eqn::AppProbFac}
	p(\va|\dE) = p(a_1, \dots, a_L|\dE) = \Tr{\cM_{a_L} \!\! \circ \cU \! \circ \dots \circ \cM_{a_1} \!\! \circ \cU(\rhoEz \! \otimes \! \rhoSz)},
\end{equation}
with the measure-and-prepare map $\cM_{a}(\rho_{ES}) = \TrP{S}{\rho_{ES} \cdot (\one_E \otimes E^a_S)} \otimes \rhoSz$. If $\dE = 1$, the unitary corresponds only to a local rotation on the system, and can be embedded in the state $\rhoSz$ and measurements $E_S^a$. Thus, the maximum depends only on the state and measurements. To compute it, we first note that since both $\rhoSz$ and $(E_S^a)_a$ are the same at each step and $\cM_a$ is a measure and prepare operation, all outcomes are independent and identically distributed. Writing $q_a = \Tr{\rhoSz E^a_S}$ as the probability of outcome $a$, with $\sum_a q_a = 1$, and using $n_a$ as the number of occurrences of a symbol $a$ in $\va$, we can write the probability as:
\begin{equation}
	p(\va|\dE=1) = \prod_{\ell=1}^{L} q_{a_\ell} = \prod_{a \in \cA} q^{n_a}_a, \qquad \text{with} \qquad n_a \in \Naturals, \quad \sum_{a \in \cA} n_a = L, \qquad  q_a > 0, \;\text{and}\; \sum_{a \in \cA} q_a = 1.
\end{equation}
For simplicity, we assume that $\va$ contains every symbol of $\cA$ at least once, such that $n_a, q_a > 0$. Otherwise, we could simply assume a smaller $\cA$ where this is the case. The maximum probability $\omega_1^{\va}$ can be found with standard techniques, such as Lagrange multipliers. Using the Lagrangian
\begin{equation}
	\cL = \prod_{a \in \cA} q^{n_a}_a - \alpha \left( \sum_{a \in \cA} q_a - 1 \right),
\end{equation}
we calculate the partial derivatives for each $q_a$, and equate them to zero:
\begin{equation}
	\frac{\partial \cL}{\partial q_a} = \left( \frac{n_a}{q_a} \right) p - \alpha = 0, \; \forall a.
\end{equation}
Since $q_a > 0$ for all $a$, $p > 0$, we may rewrite this as:
\begin{equation}
	\frac{n_a}{q_a} = \frac{\alpha}{p} = \gamma \quad \to \quad q_a = \frac{n_a}{\gamma}, \; \forall a.
\end{equation}
Summing over $a$ and applying the constraints $\sum_a q_a = 1$ and $\sum_a n_a = L$, we obtain:
\begin{equation}
	\sum_a q_a = \frac{1}{\gamma} \sum_a n_a  \quad \to \quad \gamma = L.
\end{equation}
Thus, the maximum value is
\begin{equation}
	\omega_1^{\va} = \prod_{a\in\cA} \left( \frac{n_a}{L} \right)^{n_a}, \qquad \text{for} \quad q_a = \frac{n_a}{L}, \; \forall a
\end{equation}
Note that this solution is unique if $q_a > 0$ and $p > 0$. The only alternative solutions would require that $q_a = 0$ for some $a$, and thus $p = 0 < \omega_1^{\va}$. Therefore, this is indeed the global maximum.

We emphasize that this direct calculation of $\omega_1^{\va}$ crucially depends on the specific form of the instrument we employ. If the respective outcomes do \emph{not} correspond to a $\cM_{a}$ of measure-and-prepare form, the joint probabilities $p(\va|\dE)$ do not factorize like in \cref{eqn::AppProbFac}, and an alternative approach must be used.

%%%%%%%%%%%%%%%%%%%%%%%%%%%%%%%%%%%%%%%%%%%%%%%%%%%%%%%%%%%%%%%%%%%%%%%%%%%%%%%%%%%%%%%%%%%%%%%%
\end{widetext}

\bibliography{biblio}

\begin{thebibliography}{74}
\providecommand{\natexlab}[1]{#1}
\providecommand{\url}[1]{\texttt{#1}}
\expandafter\ifx\csname urlstyle\endcsname\relax
  \providecommand{\doi}[1]{doi: #1}\else
  \providecommand{\doi}{doi: \begingroup \urlstyle{rm}\Url}\fi

\bibitem[Accardi et~al.(1982)Accardi, Frigerio, and
  Lewis]{accardi_quantum_1982}
L.~Accardi, A.~Frigerio, and J.~T. Lewis.
\newblock {Q}uantum {S}tochastic {P}rocesses.
\newblock \emph{Publ. Rest. Inst. Math. Sci.}, 18:\penalty0 97--133, 1982.
\newblock \doi{10.2977/prims/1195184017}.

\bibitem[Agrawal et~al.(2018)Agrawal, Verschueren, Diamond, and
  Boyd]{agrawal2018rewriting}
Akshay Agrawal, Robin Verschueren, Steven Diamond, and Stephen Boyd.
\newblock A rewriting system for convex optimization problems.
\newblock \emph{J. Control. Decis}, 5\penalty0 (1):\penalty0 42--60, 2018.
\newblock \doi{10.1080/23307706.2017.1397554}.

\bibitem[Alipour et~al.(2014)Alipour, Mehboudi, and
  Rezakhani]{AlipourMehboudiRezakhaniPRL14}
S.~Alipour, M.~Mehboudi, and A.~T. Rezakhani.
\newblock Quantum metrology in open systems: Dissipative cram\'er-rao bound.
\newblock \emph{Phys. Rev. Lett.}, 112:\penalty0 120405, Mar 2014.
\newblock \doi{10.1103/PhysRevLett.112.120405}.

\bibitem[Berta et~al.(2022)Berta, Borderi, Fawzi, and Scholz]{BertaMP2021}
Mario Berta, Francesco Borderi, Omar Fawzi, and Volkher~B Scholz.
\newblock Semidefinite programming hierarchies for constrained bilinear
  optimization.
\newblock \emph{Math. Program.}, 194:\penalty0 781–829, 2022.
\newblock \doi{10.1007/s10107-021-01650-1}.

\bibitem[Boyd and Vandenberghe(2004)]{CVXBook}
Stephen Boyd and Lieven Vandenberghe.
\newblock \emph{Convex optimization}.
\newblock Cambridge university press, 2004.
\newblock ISBN 9780521833783.
\newblock \doi{10.1017/CBO9780511804441}.
\newblock URL \url{https://web.stanford.edu/~boyd/cvxbook/}.

\bibitem[Braginsky and Khalili(1992)]{braginsky}
V.~B. Braginsky and F.~Y. Khalili.
\newblock \emph{Quantum Measurement}.
\newblock Cambridge University Press, 1992.
\newblock \doi{10.1017/CBO9780511622748}.

\bibitem[Breuer and Petruccione(2002)]{BreuerPetruccionebook}
Heinz-Peter Breuer and Francesco Petruccione.
\newblock \emph{The Theory of Open Quantum Systems}.
\newblock Oxford University Press, 2002.
\newblock ISBN 978-0-198-52063-4.
\newblock \doi{10.1093/acprof:oso/9780199213900.001.0001}.

\bibitem[Breuer et~al.(2016)Breuer, Laine, Piilo, and Vacchini]{BreuerRev16}
Heinz-Peter Breuer, Elsi-Mari Laine, Jyrki Piilo, and Bassano Vacchini.
\newblock Colloquium: Non-markovian dynamics in open quantum systems.
\newblock \emph{Rev. Mod. Phys.}, 88:\penalty0 021002, Apr 2016.
\newblock \doi{10.1103/RevModPhys.88.021002}.

\bibitem[Brunner et~al.(2013)Brunner, Navascu\'es, and
  V\'ertesi]{BrunnerPRL2013}
Nicolas Brunner, Miguel Navascu\'es, and Tam\'as V\'ertesi.
\newblock Dimension witnesses and quantum state discrimination.
\newblock \emph{Phys. Rev. Lett.}, 110:\penalty0 150501, Apr 2013.
\newblock \doi{10.1103/PhysRevLett.110.150501}.

\bibitem[Budini(2013)]{budini_embedding_2013}
Adri{\'a}n~A. Budini.
\newblock Embedding non-{Markovian} quantum collisional models into bipartite
  {Markovian} dynamics.
\newblock \emph{Phys. Rev. A}, 88\penalty0 (3):\penalty0 032115, September
  2013.
\newblock \doi{10.1103/PhysRevA.88.032115}.

\bibitem[Budroni and Emary(2014)]{BudroniPRL2014}
Costantino Budroni and Clive Emary.
\newblock Temporal quantum correlations and {L}eggett-{G}arg inequalities in
  multilevel systems.
\newblock \emph{Phys. Rev. Lett.}, 113:\penalty0 050401, Jul 2014.
\newblock \doi{10.1103/PhysRevLett.113.050401}.

\bibitem[Budroni et~al.(2019)Budroni, Fagundes, and
  Kleinmann]{budroni2019memory}
Costantino Budroni, Gabriel Fagundes, and Matthias Kleinmann.
\newblock Memory cost of temporal correlations.
\newblock \emph{New J. Phys.}, 21\penalty0 (9):\penalty0 093018, sep 2019.
\newblock \doi{10.1088/1367-2630/ab3cb4}.

\bibitem[Budroni et~al.(2021)Budroni, Vitagliano, and
  Woods]{budroni2021ticking}
Costantino Budroni, Giuseppe Vitagliano, and Mischa~P Woods.
\newblock Ticking-clock performance enhanced by nonclassical temporal
  correlations.
\newblock \emph{Phys. Rev. Research}, 3\penalty0 (3):\penalty0 033051, 2021.
\newblock \doi{10.1103/PhysRevResearch.3.033051}.

\bibitem[Busch et~al.(1996)Busch, Lahti, and Mittelstaedt]{BuschBook}
Paul Busch, Pekka~J. Lahti, and Peter Mittelstaedt.
\newblock \emph{The Quantum Theory of Measurement}, volume~2 of \emph{Lecture
  Notes in Physics Monographs}.
\newblock Springer-Verlag Berlin Heidelberg, 2 edition, 1996.
\newblock \doi{10.1007/978-3-540-37205-9}.

\bibitem[Caves et~al.(2002)Caves, Fuchs, and Schack]{Caves2002}
Carlton~M. Caves, Christopher~A. Fuchs, and Rüdiger Schack.
\newblock Unknown quantum states: The quantum de {F}inetti representation.
\newblock \emph{J. Math. Phys.}, 43\penalty0 (9):\penalty0 4537--4559, 2002.
\newblock \doi{10.1063/1.1494475}.

\bibitem[Chiribella(2011)]{chiribella2011}
Giulio Chiribella.
\newblock On quantum estimation, quantum cloning and finite quantum de finetti
  theorems.
\newblock In Wim van Dam, Vivien~M. Kendon, and Simone Severini, editors,
  \emph{Theory of Quantum Computation, Communication, and Cryptography}, pages
  9--25, Berlin, Heidelberg, 2011. Springer Berlin Heidelberg.
\newblock \doi{10.1007/978-3-642-18073-6_2}.

\bibitem[Chiribella et~al.(2009)Chiribella, D'Ariano, and
  Perinotti]{ChiribellaDArianoPerinottiPRA2009}
Giulio Chiribella, Giacomo~Mauro D'Ariano, and Paolo Perinotti.
\newblock Theoretical framework for quantum networks.
\newblock \emph{Phys. Rev. A}, 80:\penalty0 022339, Aug 2009.
\newblock \doi{10.1103/PhysRevA.80.022339}.

\bibitem[Chiribella et~al.(2013)Chiribella, D'Ariano, Perinotti, and
  Valiron]{ChiribellaDArianoPerinottiValironPRA2013}
Giulio Chiribella, Giacomo~Mauro D'Ariano, Paolo Perinotti, and Benoit Valiron.
\newblock Quantum computations without definite causal structure.
\newblock \emph{Phys. Rev. A}, 88:\penalty0 022318, Aug 2013.
\newblock \doi{10.1103/PhysRevA.88.022318}.

\bibitem[Choi(1975)]{choi1975}
Man-Duen Choi.
\newblock Completely positive linear maps on complex matrices.
\newblock \emph{Linear Algebra Its Appl.}, 10\penalty0 (3):\penalty0 285--290,
  1975.
\newblock ISSN 0024-3795.
\newblock \doi{10.1016/0024-3795(75)90075-0}.

\bibitem[Christandl et~al.(2007)Christandl, K{\"o}nig, Mitchison, and
  Renner]{christandl2007one}
Matthias Christandl, Robert K{\"o}nig, Graeme Mitchison, and Renato Renner.
\newblock One-and-a-half quantum de {F}inetti theorems.
\newblock \emph{Commun. Math. Phys.}, 273\penalty0 (2):\penalty0 473--498,
  2007.
\newblock \doi{10.1007/s00220-007-0189-3}.

\bibitem[Correa et~al.(2015)Correa, Mehboudi, Adesso, and
  Sanpera]{CorreaMehboudiAdessoSanpera15}
Luis~A. Correa, Mohammad Mehboudi, Gerardo Adesso, and Anna Sanpera.
\newblock Individual quantum probes for optimal thermometry.
\newblock \emph{Phys. Rev. Lett.}, 114:\penalty0 220405, Jun 2015.
\newblock \doi{10.1103/PhysRevLett.114.220405}.

\bibitem[Degen et~al.(2017)Degen, Reinhard, and
  Cappellaro]{DegenReinhardCappellaroRev17}
C.~L. Degen, F.~Reinhard, and P.~Cappellaro.
\newblock Quantum sensing.
\newblock \emph{Rev. Mod. Phys.}, 89:\penalty0 035002, Jul 2017.
\newblock \doi{10.1103/RevModPhys.89.035002}.

\bibitem[Diamond and Boyd(2016)]{diamond2016cvxpy}
Steven Diamond and Stephen Boyd.
\newblock {CVXPY}: {A} {P}ython-embedded modeling language for convex
  optimization.
\newblock \emph{J. Mach. Learn. Res}, 17\penalty0 (83):\penalty0 1--5, 2016.
\newblock \doi{10.5555/2946645.3007036}.
\newblock URL \url{https://dl.acm.org/doi/10.5555/2946645.3007036}.

\bibitem[Doherty et~al.(2002)Doherty, Parrilo, and
  Spedalieri]{dps2002distinguishing}
A.~C. Doherty, Pablo~A. Parrilo, and Federico~M. Spedalieri.
\newblock Distinguishing separable and entangled states.
\newblock \emph{Phys. Rev. Lett.}, 88:\penalty0 187904, Apr 2002.
\newblock \doi{10.1103/PhysRevLett.88.187904}.

\bibitem[Doherty et~al.(2004)Doherty, Parrilo, and
  Spedalieri]{dps2004completefamily}
Andrew~C. Doherty, Pablo~A. Parrilo, and Federico~M. Spedalieri.
\newblock Complete family of separability criteria.
\newblock \emph{Phys. Rev. A}, 69:\penalty0 022308, Feb 2004.
\newblock \doi{10.1103/PhysRevA.69.022308}.

\bibitem[Emary et~al.(2013)Emary, Lambert, and Nori]{emary_leggettgarg_2013}
Clive Emary, Neill Lambert, and Franco Nori.
\newblock Leggett–{Garg} inequalities.
\newblock \emph{Rep. Prog. Phys.}, 77\penalty0 (1):\penalty0 016001, dec 2013.
\newblock ISSN 0034-4885.
\newblock \doi{10.1088/0034-4885/77/1/016001}.

\bibitem[Fritz(2010)]{fritzNJP2010}
Tobias Fritz.
\newblock Quantum correlations in the temporal
  {Clauser–Horne–Shimony–Holt} ({CHSH}) scenario.
\newblock \emph{New J. Phys.}, 12\penalty0 (8):\penalty0 083055, 2010.
\newblock \doi{10.1088/1367-2630/12/8/083055}.

\bibitem[Fukuda et~al.(2001)Fukuda, Kojima, Murota, and Nakata]{fukuda2001}
Mituhiro Fukuda, Masakazu Kojima, Kazuo Murota, and Kazuhide Nakata.
\newblock Exploiting sparsity in semidefinite programming via matrix completion
  {I}: General framework.
\newblock \emph{SIAM J. Optim.}, 11\penalty0 (3):\penalty0 647--674, 2001.
\newblock \doi{10.1137/S1052623400366218}.

\bibitem[Gallego et~al.(2010)Gallego, Brunner, Hadley, and
  Ac\'{\i}n]{GallegoPRL2010}
Rodrigo Gallego, Nicolas Brunner, Christopher Hadley, and Antonio Ac\'{\i}n.
\newblock Device-independent tests of classical and quantum dimensions.
\newblock \emph{Phys. Rev. Lett.}, 105:\penalty0 230501, Nov 2010.
\newblock \doi{10.1103/PhysRevLett.105.230501}.

\bibitem[Giarmatzi and Costa(2021)]{giarmatzi2021witnessingquantum}
Christina Giarmatzi and Fabio Costa.
\newblock Witnessing quantum memory in non-{M}arkovian processes.
\newblock \emph{{Quantum}}, 5:\penalty0 440, April 2021.
\newblock ISSN 2521-327X.
\newblock \doi{10.22331/q-2021-04-26-440}.

\bibitem[G\"uhne et~al.(2014)G\"uhne, Budroni, Cabello, Kleinmann, and
  Larsson]{GuehnePRA2014}
Otfried G\"uhne, Costantino Budroni, Ad\'an Cabello, Matthias Kleinmann, and
  Jan-\AA{}ke Larsson.
\newblock Bounding the quantum dimension with contextuality.
\newblock \emph{Phys. Rev. A}, 89:\penalty0 062107, Jun 2014.
\newblock \doi{10.1103/PhysRevA.89.062107}.

\bibitem[Gurvits(2003)]{gurvits2003}
Leonid Gurvits.
\newblock Classical deterministic complexity of edmonds' problem and quantum
  entanglement.
\newblock In \emph{Proceedings of the Thirty-Fifth Annual ACM Symposium on
  Theory of Computing}, STOC '03, page 10–19, New York, NY, USA, 2003.
  Association for Computing Machinery.
\newblock ISBN 1581136749.
\newblock \doi{10.1145/780542.780545}.

\bibitem[Gühne and Tóth(2009)]{guhne2009}
Otfried Gühne and Géza Tóth.
\newblock Entanglement detection.
\newblock \emph{Phys. Rep.}, 474\penalty0 (1):\penalty0 1--75, 2009.
\newblock ISSN 0370-1573.
\newblock \doi{10.1016/j.physrep.2009.02.004}.

\bibitem[Harrow(2013)]{harrow2013church}
Aram~W Harrow.
\newblock The church of the symmetric subspace.
\newblock \emph{arXiv:1308.6595}, 2013.
\newblock URL \url{https://arxiv.org/abs/1308.6595}.

\bibitem[Hoffmann et~al.(2018)Hoffmann, Spee, Gühne, and
  Budroni]{hoffmann2018}
Jannik Hoffmann, Cornelia Spee, Otfried Gühne, and Costantino Budroni.
\newblock Structure of temporal correlations of a qubit.
\newblock \emph{New J. Phys.}, 20\penalty0 (10):\penalty0 102001, oct 2018.
\newblock \doi{10.1088/1367-2630/aae87f}.

\bibitem[Horodecki et~al.(1998)Horodecki, Horodecki, and
  Horodecki]{horodeckis1998pptes}
Micha\l{} Horodecki, Pawe\l{} Horodecki, and Ryszard Horodecki.
\newblock Mixed-state entanglement and distillation: Is there a ``bound''
  entanglement in {N}ature?
\newblock \emph{Phys. Rev. Lett}, 80:\penalty0 5239--5242, Jun 1998.
\newblock \doi{10.1103/PhysRevLett.80.5239}.

\bibitem[Jamiołkowski(1972)]{jamiolkowski1972}
A.~Jamiołkowski.
\newblock Linear transformations which preserve trace and positive
  semidefiniteness of operators.
\newblock \emph{Rep. Math. Phys.}, 3\penalty0 (4):\penalty0 275--278, 1972.
\newblock ISSN 0034-4877.
\newblock \doi{10.1016/0034-4877(72)90011-0}.

\bibitem[Jee et~al.(2021)Jee, Sparaciari, Fawzi, and Berta]{Jee2021}
Hyejung~H. Jee, Carlo Sparaciari, Omar Fawzi, and Mario Berta.
\newblock {Quasi-Polynomial Time Algorithms for Free Quantum Games in Bounded
  Dimension}.
\newblock In Nikhil Bansal, Emanuela Merelli, and James Worrell, editors,
  \emph{48th International Colloquium on Automata, Languages, and Programming
  (ICALP 2021)}, volume 198 of \emph{Leibniz International Proceedings in
  Informatics (LIPIcs)}, pages 82:1--82:20, Dagstuhl, Germany, 2021. Schloss
  Dagstuhl -- Leibniz-Zentrum f{\"u}r Informatik.
\newblock ISBN 978-3-95977-195-5.
\newblock \doi{10.4230/LIPIcs.ICALP.2021.82}.

\bibitem[Korbicz et~al.(2005)Korbicz, Cirac, and Lewenstein]{Korbicz2005}
J.~K. Korbicz, J.~I. Cirac, and M.~Lewenstein.
\newblock Spin squeezing inequalities and entanglement of $n$ qubit states.
\newblock \emph{Phys. Rev. Lett.}, 95:\penalty0 120502, Sep 2005.
\newblock \doi{10.1103/PhysRevLett.95.120502}.

\bibitem[Leggett(2008)]{leggett_realism_2008}
A.~J. Leggett.
\newblock Realism and the physical world.
\newblock \emph{Rep. Prog. Phys.}, 71\penalty0 (2):\penalty0 022001, jan 2008.
\newblock ISSN 0034-4885.
\newblock \doi{10.1088/0034-4885/71/2/022001}.

\bibitem[Leggett and Garg(1985)]{leggett_quantum_1985}
A.~J. Leggett and Anupam Garg.
\newblock Quantum mechanics versus macroscopic realism: {Is} the flux there
  when nobody looks?
\newblock \emph{Phys. Rev. Lett.}, 54\penalty0 (9):\penalty0 857--860, mar
  1985.
\newblock \doi{10.1103/PhysRevLett.54.857}.

\bibitem[Lindblad(1979)]{Lindblad1979}
G{\"o}ran Lindblad.
\newblock {Non-{M}arkovian quantum stochastic processes and their entropy}.
\newblock \emph{Comm. Math. Phys.}, 65\penalty0 (3):\penalty0 281--294, 1979.
\newblock \doi{10.1007/BF01197883}.

\bibitem[Luchnikov et~al.(2018)Luchnikov, Vintskevich, and
  Filippov]{luchnikov_dimension_2018}
I.~A. Luchnikov, S.~V. Vintskevich, and S.~N. Filippov.
\newblock Dimension truncation for open quantum systems in terms of tensor
  networks, jan 2018.
\newblock URL \url{http://arxiv.org/abs/1801.07418}.
\newblock arXiv:1801.07418.

\bibitem[Luchnikov et~al.(2019)Luchnikov, Vintskevich, Ouerdane, and
  Filippov]{luchnikov_simulation_2019}
I.~A. Luchnikov, S.~V. Vintskevich, H.~Ouerdane, and S.~N. Filippov.
\newblock Simulation {Complexity} of {Open} {Quantum} {Dynamics}: {Connection}
  with {Tensor} {Networks}.
\newblock \emph{Phys. Rev. Lett.}, 122\penalty0 (16):\penalty0 160401, apr
  2019.
\newblock \doi{10.1103/PhysRevLett.122.160401}.

\bibitem[Luchnikov et~al.(2022)Luchnikov, Kiktenko, Gavreev, Ouerdane,
  Filippov, and Fedorov]{luchnikov_probing_2022}
I.~A. Luchnikov, E.~O. Kiktenko, M.~A. Gavreev, H.~Ouerdane, S.~N. Filippov,
  and A.~K. Fedorov.
\newblock Probing non-{Markovian} quantum dynamics with data-driven analysis:
  {Beyond} ``black-box'' machine-learning models.
\newblock \emph{Phys. Rev. Res.}, 4\penalty0 (4):\penalty0 043002, October
  2022.
\newblock \doi{10.1103/PhysRevResearch.4.043002}.

\bibitem[Mao et~al.(2022)Mao, Spee, Xu, and G\"uhne]{mao2022strucdimbound}
Yuanyuan Mao, Cornelia Spee, Zhen-Peng Xu, and Otfried G\"uhne.
\newblock Structure of dimension-bounded temporal correlations.
\newblock \emph{Phys. Rev. A}, 105:\penalty0 L020201, Feb 2022.
\newblock \doi{10.1103/PhysRevA.105.L020201}.

\bibitem[Mehboudi et~al.(2019)Mehboudi, Sanpera, and Correa]{Mehboudi_2019}
Mohammad Mehboudi, Anna Sanpera, and Luis~A Correa.
\newblock Thermometry in the quantum regime: recent theoretical progress.
\newblock \emph{Journal of Physics A: Mathematical and Theoretical},
  52\penalty0 (30):\penalty0 303001, jul 2019.
\newblock \doi{10.1088/1751-8121/ab2828}.

\bibitem[Milz and Modi(2021)]{Milz_2021}
Simon Milz and Kavan Modi.
\newblock Quantum stochastic processes and quantum non-{M}arkovian phenomena.
\newblock \emph{PRX Quantum}, 2:\penalty0 030201, Jul 2021.
\newblock \doi{10.1103/PRXQuantum.2.030201}.

\bibitem[Navascu\'es et~al.(2009)Navascu\'es, Owari, and Plenio]{navascues2009}
Miguel Navascu\'es, Masaki Owari, and Martin~B. Plenio.
\newblock Power of symmetric extensions for entanglement detection.
\newblock \emph{Phys. Rev. A}, 80:\penalty0 052306, Nov 2009.
\newblock \doi{10.1103/PhysRevA.80.052306}.

\bibitem[O'Donoghue et~al.(2016)O'Donoghue, Chu, Parikh, and Boyd]{scspaper}
Brendan O'Donoghue, Eric Chu, Neal Parikh, and Stephen Boyd.
\newblock Conic optimization via operator splitting and homogeneous self-dual
  embedding.
\newblock \emph{J. Optim. Theory Appl}, 169\penalty0 (3):\penalty0 1042--1068,
  June 2016.
\newblock \doi{10.1007/s10957-016-0892-3}.

\bibitem[O'Donoghue et~al.(2022)O'Donoghue, Chu, Parikh, and Boyd]{scssoft}
Brendan O'Donoghue, Eric Chu, Neal Parikh, and Stephen Boyd.
\newblock {SCS}: {S}plitting {C}onic {S}olver, version 3.2.2.
\newblock \url{https://github.com/cvxgrp/scs}, November 2022.

\bibitem[Oreshkov et~al.(2012)Oreshkov, Costa, and Brukner]{Oreshkov_2012}
Ognyan Oreshkov, Fabio Costa, and {\v{C}}aslav Brukner.
\newblock Quantum correlations with no causal order.
\newblock \emph{Nat. Commun.}, 3\penalty0 (1):\penalty0 1092, Oct 2012.
\newblock \doi{10.1038/ncomms2076}.

\bibitem[Peres(1996)]{peres1996ppt}
Asher Peres.
\newblock Separability criterion for density matrices.
\newblock \emph{Phys. Rev. Lett.}, 77:\penalty0 1413--1415, Aug 1996.
\newblock \doi{10.1103/PhysRevLett.77.1413}.

\bibitem[Pollock et~al.(2018)Pollock, Rodr\'{\i}guez-Rosario, Frauenheim,
  Paternostro, and Modi]{PollockPRA2018}
Felix~A. Pollock, C\'esar Rodr\'{\i}guez-Rosario, Thomas Frauenheim, Mauro
  Paternostro, and Kavan Modi.
\newblock Non-{M}arkovian quantum processes: {C}omplete framework and efficient
  characterization.
\newblock \emph{Phys. Rev. A}, 97:\penalty0 012127, Jan 2018.
\newblock \doi{10.1103/PhysRevA.97.012127}.

\bibitem[Rivas and Huelga(2011)]{RivasHuelgabook}
{\'A}ngel Rivas and Susana~F Huelga.
\newblock \emph{Open Quantum Systems: An Introduction}.
\newblock Springer Berlin, Heidelberg, 2011.
\newblock ISBN 978-3-642-23353-1.
\newblock \doi{10.1007/978-3-642-23354-8}.

\bibitem[Rivas et~al.(2014)Rivas, Huelga, and Plenio]{RivasRev2014}
{\'A}ngel Rivas, Susana~F Huelga, and Martin~B Plenio.
\newblock Quantum non-markovianity: characterization, quantification and
  detection.
\newblock \emph{Rep. Prog. Phys.}, 77\penalty0 (9):\penalty0 094001, aug 2014.
\newblock \doi{10.1088/0034-4885/77/9/094001}.

\bibitem[Sab{\'\i}n et~al.(2014)Sab{\'\i}n, White, Hackermuller, and
  Fuentes]{sabin2014impurities}
Carlos Sab{\'\i}n, Angela White, Lucia Hackermuller, and Ivette Fuentes.
\newblock Impurities as a quantum thermometer for a {Bose}-{Einstein}
  condensate.
\newblock \emph{Sci. Rep.}, 4\penalty0 (1):\penalty0 1--6, 2014.
\newblock \doi{10.1038/srep06436}.

\bibitem[Schild and Emary(2015)]{SchildPRA2015}
Greg Schild and Clive Emary.
\newblock Maximum violations of the quantum-witness equality.
\newblock \emph{Phys. Rev. A}, 92:\penalty0 032101, Sep 2015.
\newblock \doi{10.1103/PhysRevA.92.032101}.

\bibitem[Skrzypczyk and Cavalcanti(2023)]{skrzypczyk2023}
Paul Skrzypczyk and Daniel Cavalcanti.
\newblock \emph{Semidefinite Programming in Quantum Information Science}.
\newblock 2053-2563. IOP Publishing, 2023.
\newblock ISBN 978-0-7503-3343-6.
\newblock \doi{10.1088/978-0-7503-3343-6}.

\bibitem[Sohbi et~al.(2021)Sohbi, Markham, Kim, and Quintino]{Sohbi2021}
Adel Sohbi, Damian Markham, Jaewan Kim, and Marco~T{\'{u}}lio Quintino.
\newblock Certifying dimension of quantum systems by sequential projective
  measurements.
\newblock \emph{{Quantum}}, 5:\penalty0 472, June 2021.
\newblock ISSN 2521-327X.
\newblock \doi{10.22331/q-2021-06-10-472}.

\bibitem[Spee et~al.(2020)Spee, Budroni, and Gühne]{spee2020simulating}
Cornelia Spee, Costantino Budroni, and Otfried Gühne.
\newblock Simulating extremal temporal correlations.
\newblock \emph{New J. Phys.}, 22\penalty0 (10):\penalty0 103037, oct 2020.
\newblock \doi{10.1088/1367-2630/abb899}.

\bibitem[Stockton et~al.(2003)Stockton, Geremia, Doherty, and
  Mabuchi]{stockton2003}
John~K. Stockton, J.~M. Geremia, Andrew~C. Doherty, and Hideo Mabuchi.
\newblock Characterizing the entanglement of symmetric many-particle
  spin-$\frac{1}{2}$ systems.
\newblock \emph{Phys. Rev. A}, 67:\penalty0 022112, Feb 2003.
\newblock \doi{10.1103/PhysRevA.67.022112}.

\bibitem[Tamascelli et~al.(2018)Tamascelli, Smirne, Huelga, and
  Plenio]{tamascelli_nonperturbative_2018}
D.~Tamascelli, A.~Smirne, S.~F. Huelga, and M.~B. Plenio.
\newblock Nonperturbative {Treatment} of non-{Markovian} {Dynamics} of {Open}
  {Quantum} {Systems}.
\newblock \emph{Phys. Rev. Lett.}, 120\penalty0 (3):\penalty0 030402, January
  2018.
\newblock \doi{10.1103/PhysRevLett.120.030402}.

\bibitem[Tavakoli et~al.(2023)Tavakoli, Pozas-Kerstjens, Brown, and
  Araújo]{tavakoli2023semidefinite}
Armin Tavakoli, Alejandro Pozas-Kerstjens, Peter Brown, and Mateus Araújo.
\newblock Semidefinite programming relaxations for quantum correlations.
\newblock 2023.
\newblock URL \url{https://arxiv.org/abs/2307.02551}.

\bibitem[Terhal(2000)]{terhal2000}
Barbara~M. Terhal.
\newblock Bell inequalities and the separability criterion.
\newblock \emph{Phys. Lett. A}, 271\penalty0 (5):\penalty0 319--326, 2000.
\newblock ISSN 0375-9601.
\newblock \doi{10.1016/S0375-9601(00)00401-1}.

\bibitem[T\'oth et~al.(2015)T\'oth, Moroder, and G\"uhne]{TothPRL2015}
G\'eza T\'oth, Tobias Moroder, and Otfried G\"uhne.
\newblock Evaluating convex roof entanglement measures.
\newblock \emph{Phys. Rev. Lett.}, 114:\penalty0 160501, Apr 2015.
\newblock \doi{10.1103/PhysRevLett.114.160501}.

\bibitem[Vieira and Budroni(2022)]{vieira2022temporal}
Lucas~B. Vieira and Costantino Budroni.
\newblock Temporal correlations in the simplest measurement sequences.
\newblock \emph{Quantum}, 6:\penalty0 623, 2022.
\newblock \doi{10.22331/q-2022-01-18-623}.

\bibitem[Vitagliano and Budroni(2023)]{vitagliano2023}
Giuseppe Vitagliano and Costantino Budroni.
\newblock Leggett-garg macrorealism and temporal correlations.
\newblock \emph{Phys. Rev. A}, 107:\penalty0 040101, Apr 2023.
\newblock \doi{10.1103/PhysRevA.107.040101}.

\bibitem[Watrous(2018)]{watrous2018}
John Watrous.
\newblock \emph{The Theory of Quantum Information}.
\newblock Cambridge University Press, 2018.
\newblock \doi{10.1017/9781316848142}.

\bibitem[Wolkowicz et~al.(2012)Wolkowicz, Saigal, and
  Vandenberghe]{wolkowicz2012sdp}
Henry Wolkowicz, Romesh Saigal, and Lieven Vandenberghe.
\newblock \emph{Handbook of semidefinite programming: theory, algorithms, and
  applications}, volume~27.
\newblock Springer Science \& Business Media, 2012.
\newblock \doi{10.1007/978-1-4615-4381-7}.

\bibitem[Xue et~al.(2015)Xue, James, Shabani, Ugrinovskii, and
  Petersen]{xue_quantum_2015}
Shibei Xue, Matthew~R. James, Alireza Shabani, Valery Ugrinovskii, and Ian~R.
  Petersen.
\newblock Quantum filter for a class of non-{Markovian} quantum systems.
\newblock In \emph{54th {IEEE} {Conference} on {Decision} and {Control}
  ({CDC})}, pages 7096--7100, December 2015.
\newblock \doi{10.1109/CDC.2015.7403338}.

\bibitem[Xue et~al.(2020)Xue, Nguyen, James, Shabani, Ugrinovskii, and
  Petersen]{xue_modeling_2020}
Shibei Xue, Thien Nguyen, Matthew~R. James, Alireza Shabani, Valery
  Ugrinovskii, and Ian~R. Petersen.
\newblock Modeling for {Non}-{Markovian} {Quantum} {Systems}.
\newblock \emph{IEEE Trans. Control Syst. Technol.}, 28\penalty0 (6):\penalty0
  2564--2571, November 2020.
\newblock ISSN 1558-0865.
\newblock \doi{10.1109/TCST.2019.2935421}.

\bibitem[Yu et~al.(2022)Yu, Simnacher, Nguyen, and
  G\"uhne]{yu2020quantuminspired}
Xiao-Dong Yu, Timo Simnacher, H.~Chau Nguyen, and Otfried G\"uhne.
\newblock Quantum-inspired hierarchy for rank-constrained optimization.
\newblock \emph{PRX Quantum}, 3:\penalty0 010340, Mar 2022.
\newblock \doi{10.1103/PRXQuantum.3.010340}.

\bibitem[Zheng et~al.(2021)Zheng, Fantuzzi, and Papachristodoulou]{zheng2021}
Yang Zheng, Giovanni Fantuzzi, and Antonis Papachristodoulou.
\newblock Chordal and factor-width decompositions for scalable semidefinite and
  polynomial optimization.
\newblock \emph{Annu. Rev. Control}, 52:\penalty0 243--279, 2021.
\newblock ISSN 1367-5788.
\newblock \doi{10.1016/j.arcontrol.2021.09.001}.

\end{thebibliography}
\end{document}